\newif\ifcode
\newtheorem{theorem}{Theorem}
\newtheorem{definition}{Definition}
\newtheorem{lemma}[theorem]{Lemma}
\newcounter{linenumber}
\newcommand{\RNum}[1]{\uppercase\expandafter{\romannumeral #1\relax}}
\def\P{\ensuremath{\mathcal{P}}}
\def\T{\ensuremath{\mathcal{T}}}
\def\S{\ensuremath{\mathcal{S}}}
\def\D{\ensuremath{\mathcal{D}}}
\def\Nat{\ensuremath{\mathbb{N}}}
\def\dd{search}
\newcommand{\LS}{LS}
\newcommand{\true}{\lit{true}}
\newcommand{\false}{\lit{false}}
\newcommand{\remove}[1]{}
\newcommand{\id}[1]{\mbox{\textit{#1}}}
\newcommand{\LL}{\ms{LL}}
\newcommand{\ignore}[1]{}
\begin{document}
\bibliographystyle{abbrv}

\title{In the Search of Optimal Concurrency}

\author{
Vincent Gramoli$^3$~~~Petr Kuznetsov$^1$~~~Srivatsan Ravi$^2$ \\
$^1$\normalsize T\'el\'ecom ParisTech \\
$^2$\normalsize Purdue University \\
$^3$\normalsize University of Sydney
}

\date{}
\maketitle

\renewcommand*\thesection{\arabic{section}}
\renewcommand*\thesubsection{\arabic{section}.\arabic{subsection}}

\begin{abstract}
Implementing a concurrent data structure typically begins with
defining its sequential specification.
However, when used \emph{as is}, a nontrivial sequential data structure, such as 
a linked list, a search tree, or a hash table, may expose incorrect behavior: lost
updates, inconsistent responses, etc.
To ensure correctness, portions of the sequential code operating on
the shared data must be ``protected'' from data races using
synchronization primitives and, thus, certain schedules
of the steps of concurrent operations must be rejected.
But can we ensure that we do not ``overuse'' synchronization, \emph{i.e.}, that we
reject a concurrent schedule only if it violates correctness?   

In this paper, we treat this question formally by 
introducing the
notion of a \emph{concurrency-optimal} implementation.
A program's concurrency is defined here as its ability to accept 
concurrent schedules, \emph{i.e.}, interleavings of 
steps of its sequential implementation.
An implementation is concurrency-optimal if it accepts all 
interleavings that do not violate the program's correctness.
We explore the concurrency properties of 
\emph{search} data structures which can be represented in the form of directed
acyclic graphs exporting insert, delete and search operations. 
We prove, for the first time, that \emph{pessimistic} (\emph{e.g.}, based on
conservative locking) and \emph{optimistic serializable} (\emph{e.g.}, based
on serializable transactional memory) implementations of search data-structures
are incomparable in terms of concurrency.
Specifically, there exist simple interleavings of sequential code
that cannot be accepted by \emph{any} pessimistic (and \emph{resp.}, serializable
optimistic) implementation,
but accepted by a serializable optimistic one (and \emph{resp.}, pessimistic).
Thus, neither of these two implementation classes is concurrency-optimal. 
\end{abstract}

\ignore{
\begin{abstract}
Modern concurrent programming benefits from a large variety of synchronization techniques.
These include conventional pessimistic locking, as well as optimistic
techniques based on conditional synchronization instructions or 
transactional memory. Yet, it is unclear which of these approaches 
better leverage the concurrency inherent to multi-cores.

In this paper, 
we compare formally ``the amount of concurrency''
one can obtain by converting a  sequential program into a concurrent
one using optimistic or pessimistic synchronization techniques. 
We introduce a new correctness
criterion for concurrent programs, defined independently 
of the synchronization techniques they use. 
We treat a program's concurrency as its ability to accept a
concurrent schedule, a metric inspired by the theories of both databases
and transactional memory. 
We consider two classes of synchronization techniques: 
pessimistic ones that conservatively protect data items before accessing them and optimistic ones that 
may have to roll back and re-access the items.

We explore the concurrency properties of a large class of 
data structures which can be represented in the form of directed
acyclic graphs, and that we call \emph{\dd{}} structures.
We show that, for some workloads, serializable optimistic {\dd} implementations,
allow for more concurrency than \emph{any} pessimistic implementation.
We also show that there exist pessimistic {\dd} implementations that, for other workloads, allow for more 
concurrency than \emph{any} serializable optimistic implementation.
We derive that a \emph{concurrency-optimal} implementation 
must combine the advantages of pessimistic implementations, 
namely their semantics awareness,  and the advantages 
of optimistic implementations, namely their ability to 
restart operations in case of conflicts.
\end{abstract}
}

%
\pagenumbering{arabic}

\section{Introduction}
\label{sec:intro}
Building concurrent abstractions that efficiently exploit 
multi-processing abilities of modern hardware is a challenging task.
One of the most important issues here is the need for \emph{synchronization}, \emph{i.e.},
ensuring that concurrent processes resolve conflicts on the shared
data in a consistent way.
Indeed, using, \emph{as is}, a data structure designed for conventional sequential
settings in a concurrent environment may cause different kinds of
inconsistencies, and synchronization must be used to avoid them.  
A variety of synchronization techniques have been developed so far.
Conventional pessimistic synchronization 
conservatively protects shared data with locks before reading or
modifying it.
Speculative synchronization, achieved using transactional memory (TM) or
conditional instructions, such as CAS or LL/SC,  
optimistically executes memory operations with a risk of aborting them in the future.
A programmer typically uses these synchronization techniques 
to ``wrap'' fragments of a sequential implementation of the desired
abstraction (\emph{e.g.}, \emph{operations} on the shared data structure)
in a way that allows the resulting concurrent execution to ``appear''
\emph{locally sequential}, 
while preserving a \emph{global} correctness criterion.

It is however difficult to tell in advance 
which of the 
techniques will provide more \emph{concurrency}, \emph{i.e.}, which one would allow
the
resulting programs to process more 
executions of concurrent operations 
without data conflicts. 
Implementations based on TMs~\cite{ST95,norec}, which execute concurrent accesses speculatively, 
may seem more concurrent than lock-based counterparts whose concurrent accesses are blocking. 
But TMs conventionally impose
\emph{serializability}~\cite{Pap79-serial} or even
stronger properties~\cite{tm-book} on operations 
encapsulated within transactions. 
This may prohibit certain concurrent
scenarios allowed by a large class of dynamic data
structures~\cite{GG14}. 

In this paper, we reason formally about the  ``amount of
concurrency'' one can obtain 
by turning a sequential program into a concurrent one. 
To enable fair comparison of different synchronization techniques,    
we (1)~define what it means for a concurrent program to be
correct regardless of the type of  synchronization it uses and 
(2)~define a metric of concurrency. 
These definitions allow
us to compare concurrency properties offered by serializable optimistic and pessimistic
synchronization techniques,  whose popular examples are, respectively, transactions
and conservative locking.

\paragraph{Correctness}
Our  novel consistency criterion, called \emph{locally-serializable
linearizability}, is an intersection of \emph{linearizability} and 
a new \emph{local serializability} criterion.

Suppose that we want to design a concurrent implementation of a data
type $\T$ (\emph{e.g.}, \textsf{integer set}), given its sequential
implementation $\S$
(\emph{e.g.}, based on a sorted linked list). 
A concurrent implementation of  $\T$ is
\emph{locally serializable} with respect to $\S$ if it
ensures that the local execution of \emph{reads} and \emph{writes} of each 
operation is, in precise sense,  equivalent to \emph{some} execution of $\S$.
This condition is weaker than serializability
since it does not require the existence of  a \emph{single} sequential 
execution that is consistent with all local executions.
It is however sufficient to guarantee that
executions do not observe an inconsistent transient state that could 
lead to fatal arithmetic errors, \emph{e.g.}, division-by-zero.

In addition, for the implementation of $\T$ to ``make sense'' globally, 
every concurrent execution should be  
\emph{linearizable}~\cite{HW90,AW04}: the invocation and responses of
high-level operations observed in the execution
should constitute a correct sequential history
of $\T$.
The combination of local serializability and linearizability gives
a correctness criterion that we call \emph{\LS-linearizability},
where {\LS} stands for ``locally serializable''.
We show that LS-linearizability,
just like linearizability, is
\emph{compositional}~\cite{HW90,HS08-book}: a composition of LS-linearizable 
implementations is also LS-linearizable. 

\ignore{
We apply the criterion of LS-linearizability to  
two broad classes of synchronization techniques: 
\emph{pessimistic} and \emph{serializable optimistic}. 
Pessimistic implementations capture what can be achieved 
using classic conservative locks like mutexes, 
spinlocks, reader-writer locks.
Optimistic implementations, however proceed speculatively and
may roll back in the case of conflicts.
Additionally, \emph{serializable} optimistic techniques, \emph{e.g.},  
relying on conventional TMs, like TinySTM~\cite{FFR08} or NOrec~\cite{norec}
allow for transforming any sequential implementation of a data type to a LS-linearizable concurrent one.
Note that, pessimistic implementations may sometimes sacrifice
serializability by fine-tuning to the semantics of the data type.
}

\paragraph{Concurrency metric}
We measure the amount of concurrency provided by an LS-linearizable implementation as the set of schedules it accepts.
To this end, we define a concurrency metric 
inspired by the analysis of parallelism in database concurrency control~\cite{Yan84,Her90}
and transactional memory~\cite{GHF10}.
More specifically, we assume an external scheduler that defines which
processes execute which steps of the corresponding sequential program 
in a dynamic and unpredictable fashion. 
This allows us to define concurrency provided by an implementation as the set of \emph{schedules} 
(interleavings of reads and writes of concurrent sequential operations) 
it \emph{accepts} (is able to effectively process).
%

Our concurrency metric is platform-independent and it allows for
measuring relative concurrency of LS-linearizable implementations
using arbitrary synchronization techniques.   
\ignore{
We do not claim that this metric necessarily captures 
efficiency, as it does not account for other factors, 
like cache sizes, cache coherence protocols, or computational costs of 
validating a schedule, which may also affect performance on
multi-core architectures.
}

The combination of our correctness and concurrency definitions
provides a framework to compare the concurrency one can get
by choosing a particular synchronization technique for a specific data type.

\paragraph{Measuring concurrency: pessimism vs. serializable optimism}
\ignore{
For the first time, we analytically capture the
inherent incomparability of serializable optimistic vs. pessimistic
implementations when it comes to exploiting concurrency.
We show that serializable optimistic implementations of a large class of 
data structure workloads
allow for more concurrency than \emph{any} pessimistic implementation.
We also show that there exist pessimistic implementations that, for other workloads, allow for more 
concurrency than \emph{any} serializable optimistic implementation.
}
We explore the concurrency properties of a large class of
\emph{search} concurrent data structures.
Search data structures maintain data 
in the form of a rooted directed
acyclic graph (DAG), where each node is a $\tup{\textit{key},\textit{value}}$
pair, and export operations \emph{insert}$(\textit{key},\textit{value})$,
\emph{delete}$(\textit{key})$, and \emph{find}$(\textit{key})$ 
with the natural sequential semantics. 
\ignore{
An \emph{insert}$(\textit{key})$ checks whether the
dictionary already contains a node equipped with the given key, and 
if so, returns $\false$, otherwise it creates a new node, links it to the graph and returns $\true$.
A \emph{remove}$(\textit{key})$ operation checks whether the
dictionary contains a node with the given key, unlinks it
from the graph and returns $\true$, otherwise returns $\false$. 
A \emph{find}$(\textit{key})$ operation returns the pointer of a node
with the given key if such a node is found and returns $\false$
otherwise.
}

A typical sequential implementation of these operations
traverses the graph starting from the root in order to locate the
``relevant'' area corresponding to its $\textit{key}$ parameter.
If the operation is \textit{insert} or \textit{delete} and the relevant area satisfies certain
conditions, the operation updates it.
The class includes many popular data structures, such as linked lists, skiplists, and
search trees, implementing various abstractions like sets and
multi-sets.
Under reasonable assumptions on the representation of data, search
structures are ``concurrency-friendly'': 
high-level operations may commute (and, thus, proceed concurrently and independently) 
if their relevant sets are disjoint.
To exploit this potential concurrency, various synchronization techniques can be
employed.

In this paper, we compare the concurrency properties of two classes of
search-structure implementations: \emph{pessimistic} and \emph{serializable optimistic}.
Pessimistic implementations capture what can be achieved 
using classic conservative locks like mutexes, 
spinlocks, reader-writer locks.
In contrast, optimistic implementations, however proceed speculatively and
may roll back in the case of conflicts.
Additionally, \emph{serializable} optimistic techniques, \emph{e.g.},  
relying on conventional TMs, like TinySTM~\cite{FFR08} or NOrec~\cite{norec}
allow for transforming any sequential implementation of a data type to a LS-linearizable concurrent one.

The main result of this paper is that synchronization techniques based on pessimism and 
serializable optimism, are not concurrency-optimal:
we show that no one of their respective set of accepted concurrent schedules include the other. 

On the one hand, we prove, for the first time, that there exist simple schedules
that are not accepted by \emph{any} pessimistic implementation,
but accepted by a serializable optimistic implementation.
Our proof technique, which is interesting in its own right, is based on
the following intuitions: a pessimistic implementation 
has to proceed irrevocably and over-conservatively reject 
a potentially acceptable schedule, simply because it \emph{may} result
in a data conflict thus leading the data structure to an inconsistent
state. However, an optimistic implementation of a
search data structure may (partially or completely) restart an operation 
depending on the current schedule.  
This way even schedules that potentially lead to conflicts may be
optimistically accepted. 

On the other hand, we show that pessimistic implementations can be designed to exploit
the semantics of the data type.
In particular, they can allow operations updating disjoint sets of
data items to proceed independently and preserving linearizability of
the resulting history, even though the execution is not serializable.    
In such scenarios, pessimistic implementations carefully adjusted to
the data types we implement can supersede the 
``semantic-oblivious'' optimistic serializable implementations.
Thus, neither pessimistic nor serializable optimistic implementations
are concurrency-optimal.

Our comparative analysis of concurrency properties of pessimistic and
serializable optimistic implementation suggests 
that combining the advantages of pessimism, 
namely its semantics awareness, and the advantages 
of optimism, namely its ability to 
restart operations in case of conflicts, 
enables implementations that are strictly 
better-suited for exploiting concurrency 
than any of these two techniques taken individually.
To the best of our knowledge, this is the first formal analysis of
the relative abilities of different synchronization techniques to exploit concurrency
in dynamic data structures and lays the foundation for designing concurrent data structures
that are concurrency-optimal.

\paragraph{Roadmap}
We define the class of concurrent implementations we consider in Section~\ref{sec:prel}.
In Section~\ref{sec:concurrency}, we define the correctness criterion and our concurrency metric.
Section~\ref{sec:dirds} defines the class of \dd{} structures for which our concurrency lower bounds apply.
In Section~\ref{sec:pessimistic}, we analyse the concurrency provided by pessimistic and serializabile optimistic synchronization
techniques in exploiting concurrency w.r.t \dd{} structures.
Sections~\ref{sec:related} and \ref{sec:conc} present the related work and concluding remarks respectively.
%
%
\section{Preliminaries}
\label{sec:prel}
\paragraph{Sequential types and implementations}
%
An \emph{type} $\tau$ is a tuple
$(\Phi,\Gamma, Q, q_0, \delta)$ where
$\Phi$ is a set of operations,
$\Gamma$ is a set of responses, $Q$ is a set of states, $q_0\in Q$ is an
initial state and 
$\delta \subseteq Q\times \Phi \times Q\times \Gamma$ 
is a \emph{sequential specification} that determines, for each state
and each operation, the set of possible
resulting states and produced responses~\cite{AFHHT07}. 

Any type $\tau=(\Phi,\Gamma, Q, q_0, \delta)$ is associated with a \emph{sequential implementation}
$\id{IS}$.
The implementation encodes states in $Q$ using a collection of elements
$X_1,X_2,\ldots $ and,  for each operation of $\tau$, specifies a
sequential \emph{read-write} algorithm.
Therefore, in the implementation $\id{IS}$, an operation performs a
sequence of  \emph{reads} and \emph{writes} on 
$X_1,X_2,\ldots$ and returns a response $r\in \Gamma$. 
The implementation guarantees that, when executed sequentially, starting from the state of
$X_1,X_2,\ldots$ encoding $q_0$, the operations eventually return
responses satisfying $\delta$.

\paragraph{Concurrent implementations}
We consider an asynchronous shared-memory
system in which a set of processes communicate by
applying \emph{primitives} on shared \emph{base objects}~\cite{Her91}.

We tackle the problem of turning the sequential
algorithm $\id{IS}$ of type $\tau$ into a \emph{concurrent} one, shared by 
$n$ processes $p_1,\ldots,p_n$ ($n\in\Nat$).
The idea is that the concurrent algorithm essentially follows
$\id{IS}$, but to ensure correct operation under concurrency, it
replaces read and write operations on 
$X_1,X_2,\ldots$ in operations of $\id{IS}$ with their base-object
implementations. 

Throughout this paper, we use the term \emph{operation} to refer to
high-level operations of the type.
Reads and writes implemented by a concurrent algorithm are referred simply 
as \emph{reads} and \emph{writes}.
Operations on base objects are referred to as \emph{primitives}.

We also consider concurrent implementation that 
execute portions of sequential code  \emph{speculatively}, and restart
their operations when conflicts are encountered.  
To account for such implementations, we assume that an implemented
read or write may \emph{abort} by returning a special response
$\bot$. In this case, we say that the corresponding (high-level)
operation is \emph{aborted}. 

Therefore, our model applies to all concurrent algorithms 
in which a high-level operation can be seen as a
sequence of reads and writes on elements $X_1,X_2,\ldots$ (representing
the state of the data structure), with the option of aborting the
current operation and restarting it after.
Many existing concurrent data structure implementations comply with this model
as we illustrate below. 

\paragraph{Executions and histories}
An \emph{execution} of a concurrent implementation is a sequence
of invocations and responses of high-level operations of type $\tau$, 
invocations and responses of read and write
operations, and invocations and responses of base-object primitives.
We assume that executions are \emph{well-formed}:
no process invokes a new read or write, or high-level operation before
the previous read or write, or a high-level operation, resp., 
returns, or takes steps outside its 
operation's interval.

Let $\alpha|p_i$ denote the subsequence of an execution $\alpha$
restricted to the events of process $p_i$.
Executions $\alpha$ and $\alpha'$ are \emph{equivalent} if for every process
$p_i$, $\alpha|p_i=\alpha'|p_i$.
An operation $\pi$ \emph{precedes} another operation $\pi'$ in an execution
$\alpha$, 
denoted $\pi \rightarrow_{\alpha} \pi'$, 
if the response of $\pi$ occurs before the invocation of $\pi'$.
Two operations are \emph{concurrent} if neither precedes
the other. 
An execution is \emph{sequential} if it has no concurrent 
operations. 
A sequential execution $\alpha$ is \emph{legal} 
if for every object $X$, every read of $X$ in $\alpha$ 
returns the latest written value of $X$.
An operation is \emph{complete} in $\alpha$ if the invocation event is
followed by a \emph{matching} (non-$\bot$) response or aborted; otherwise, it is \emph{incomplete} in $\alpha$.
Execution $\alpha$ is \emph{complete} if every operation is complete in $\alpha$.

The \emph{history exported by an execution $\alpha$} is
the subsequence of $\alpha$ reduced to the invocations and responses
of operations, reads and writes, except for the reads
and writes that return $\bot$ (the abort response). 

\paragraph{High-level histories and linearizability}
%
A \emph{high-level history} $\tilde H$ of an execution $\alpha$ is the subsequence of $\alpha$ consisting of all
invocations and responses of \emph{non-aborted} operations.
A complete high-level history $\tilde H$ is \emph{linearizable} with 
respect to an object type $\tau$ if there exists
a sequential high-level history $S$ equivalent to $H$ such that
(1) $\rightarrow_{\tilde H}\subseteq \rightarrow_S$ and
(2) $S$ 
is consistent with the sequential specification of type $\tau$.
Now a high-level history $\tilde H$ is linearizable if it can be
\emph{completed} (by adding matching responses to a subset of
incomplete operations in $\tilde H$ and removing the rest)
to a linearizable high-level history~\cite{HW90,AW04}.

\paragraph{Optimistic and pessimistic implementations}
Note that in our model an implementations may, under
certain conditions, abort an operation:
some read or write return $\bot$,
in which case the corresponding operation also returns $\bot$.
Popular classes of such \emph{optimistic} implementations are those based on
``lazy synchronization''~\cite{HHL+05,HS08-book} (with the ability of
returning $\bot$ and re-invoking an operation) or   
transactional memory (\emph{TM})~\cite{ST95,norec}.

In the subclass of \emph{pessimistic} implementations, 
no execution includes operations that return $\bot$.  
Pessimistic implementations are typically \emph{lock-based} or based
on pessimistic TMs~\cite{PLE12}.
A lock provides 
exclusive (resp., shared) access to an element $X$ through synchronization primitives
$\lit{lock}(X)$ (resp., $\lit{lock-shared}(X)$), 
and $\lit{unlock}(X)$ (resp., $\lit{unlock-shared}(X)$).
A process \emph{releases} the lock it holds by invoking
$\lit{unlock}(X)$ or $\lit{unlock-shared}(X)$.  
When $\lit{lock}(X)$ 
invoked by a process $p_i$ returns, we say that $p_i$ \emph{holds
a lock on $X$} (until $p_i$ returns from the subsequent $\lit{lock}(X)$).  
When $\lit{lock-shared}(X)$ 
invoked by $p_i$ returns, we say that $p_i$ \emph{holds
a shared lock on $X$} (until $p_i$ returns from the subsequent $\lit{lock-shared}(X)$). 
At any moment, at most one process may hold a lock on an element $X$.
Note that two processes can hold a shared lock on $X$ at a time.
We assume that locks are \emph{starvation-free}: if no process holds a
lock on $X$ forever, then every $\lit{lock}(X)$
eventually returns. 
Given a sequential implementation of a data type, 
a corresponding lock-based concurrent one 
is derived by inserting the synchronization primitives ($\lit{lock}$
and $\lit{unlock}$) to protect read and write accesses to the shared data. 

\section{Correctness and concurrency metric}
\label{sec:concurrency}
In this section, we define the correctness criterion of \emph{locally serializable linearizability}
and introduce the framework for comparing the relative abilities of different synchronization technique in 
exploiting concurrency.
\subsection{Locally serializable linearizability}
Let $H$ be a history and let $\pi$ be a high-level operation in $H$. 
Then $H|\pi$ denotes the subsequence of $H$ consisting of the events
of $\pi$, except for the last aborted read or write, if any.
Let $\id{IS}$ be a sequential implementation of an object of type
$\tau$ and $\Sigma_{\id{IS}}$, the set of histories of $\id{IS}$. 
\begin{definition}[LS-linearizability]
\label{def:lin}
A history ${H}$ is \emph{locally serializable with respect to}
${\id{IS}}$ if for every high-level operation $\pi$ in $H$,
there exists $S \in \Sigma_{\id{IS}}$ such that $H|\pi=S|\pi$.

A history ${H}$ is \emph{\LS-linearizable with respect to
$(\id{IS},\tau)$}  (we also write $H$ is $(\id{IS},\tau)$-LSL)  if:
(1) ${H}$ is locally serializable with respect to
$\id{IS}$ and (2) the corresponding high-level history $\tilde H$ 
is linearizable with respect to $\tau$.
\end{definition}
Observe that local serializability stipulates that the execution is 
witnessed sequential by every operation.
Two different operations (even when invoked by the same process) are not
required to witness mutually consistent sequential executions.

A concurrent implementation $I$ is \emph{\LS-linearizable with respect to
$(\id{IS},\tau)$} (we also write $I$ is $(\id{IS},\tau)$-LSL)
if every history exported by $I$ is $(\id{IS},\tau)$-LSL.   
Throughout this paper, when we refer to a concurrent implementation of $(\id{IS},\tau)$, 
we assume that it is \LS-linearizable with respect to $(\id{IS},\tau)$.

\paragraph{Compositionality}
We define the composition of two distinct object types $\tau_1$ and $\tau_2$ 
as a type $\tau_1\times\tau_2=(\Phi,\Gamma,Q,q_0,\delta)$ as follows: 
$\Phi=\Phi_1\cup \Phi_2$, $\Gamma=\Gamma_1\cup 
\Gamma_2$,\footnote{Here we treat each $\tau_i$ as a distinct type by adding
index $i$ to all elements of $\Phi_i$, $\Gamma_i$, and $Q_i$.}   
$Q=Q_1\times Q_2$,
$q_0=({q_0}_1,{q_0}_2)$, and  $\delta \subseteq Q\times \Phi \times Q\times
\Gamma$ is such that $((q_1,q_2),\pi,(q_1'q_2'),r)\in\delta$ if and only if for $i\in \{1,2\}$, if 
$\pi\in \Phi_i$ then $(q_i,\pi,q_i',r)\in\delta_i$ $\wedge$ $q_{3-i}=q^{\prime}_{3-i}$.

Every sequential implementation $\id{IS}$ of an object  $O_1\times O_2$ of a
composed type $\tau_1\times\tau_2$ naturally induces two sequential
implementations $I_{S1}$ and $I_{S2}$ of objects $O_1$ and $O_2$,
respectively. 
Now a correctness criterion 
$\Psi$
is \emph{compositional} if for every
history $H$ on an object composition $O_1\times O_2$, 
if 
$\Psi$
holds for $H|O_i$ with
respect to $I_{Si}$, for $i \in \{1,2\}$, then
$\Psi$
holds for $H$ with
respect to $\id{IS}=I_{S1}\times I_{S2}$.
Here, $H|O_i$ denotes the subsequence of $H$ consisting of events on $O_i$.
\begin{theorem}
\label{th:comp:app}
\LS-linearizability is compositional. 
\end{theorem}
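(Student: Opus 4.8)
The plan is to prove compositionality of LS-linearizability by treating its two conjuncts separately: linearizability of the high-level history (which is already known to be compositional from Herlihy and Wing) and local serializability (which we must handle directly). Since LS-linearizability is the conjunction of these two properties, and both are closed under composition, the composite property is too.

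First I would set up notation: fix a composed type $\tau_1\times\tau_2$, a sequential implementation $\id{IS}=I_{S1}\times I_{S2}$, and a history $H$ on the object composition $O_1\times O_2$ such that for each $i\in\{1,2\}$, $H|O_i$ is $(I_{Si},\tau_i)$-LSL. By Definition~\ref{def:lin} this gives us, for each $i$, that $H|O_i$ is locally serializable with respect to $I_{Si}$ and that the high-level history $\widetilde{H|O_i}$ is linearizable with respect to $\tau_i$. For the linearizability part, I would invoke compositionality of linearizability: since every high-level operation $\pi$ touches only one of $O_1,O_2$, the history $\widetilde H$ restricted to $O_i$ is exactly $\widetilde{H|O_i}$, and by the Herlihy--Wing composition theorem $\widetilde H$ is linearizable with respect to $\tau_1\times\tau_2$ (the definition of $\delta$ for the composed type is precisely the ``product'' that makes this work, since it leaves the other component's state unchanged).

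The substantive step is local serializability. Here the key observation is that local serializability is a purely per-operation condition: $H$ is locally serializable with respect to $\id{IS}$ iff for every high-level operation $\pi$ in $H$ there exists $S\in\Sigma_{\id{IS}}$ with $H|\pi=S|\pi$. Fix such a $\pi$; without loss of generality $\pi$ is an operation on $O_1$ (so $\pi\in\Phi_1$). Then $H|\pi = (H|O_1)|\pi$, and by local serializability of $H|O_1$ with respect to $I_{S1}$ there is a sequential history $S_1\in\Sigma_{I_{S1}}$ with $(H|O_1)|\pi = S_1|\pi$. I then need to exhibit a sequential history $S\in\Sigma_{\id{IS}}$ of the composed implementation with $S|\pi = S_1|\pi$. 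The natural choice is to take $S$ to be $S_1$ itself, viewed as a run of $\id{IS}=I_{S1}\times I_{S2}$ in which the $O_2$-component stays in its initial state $q_{0_2}$ and no $O_2$-operations are invoked; one checks from the definition of the composed sequential implementation that any legal sequential run of $I_{S1}$ lifts to a legal sequential run of $\id{IS}$ (the reads and writes on the elements encoding $O_1$'s state behave identically, and the elements encoding $O_2$ are simply untouched). Hence $S\in\Sigma_{\id{IS}}$ and $S|\pi=S_1|\pi=H|\pi$, as required. Since $\pi$ was arbitrary, $H$ is locally serializable with respect to $\id{IS}$, and combined with linearizability of $\widetilde H$ we conclude $H$ is $(\id{IS},\tau_1\times\tau_2)$-LSL.

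The main obstacle I anticipate is the bookkeeping in the ``lifting'' step: making precise the claim that the sequential implementation $\id{IS}$ induced by the product type genuinely restricts to $I_{Si}$ on each component, so that a history of $I_{S1}$ really is a history of $\id{IS}$. This requires being careful about how the elements $X_1,X_2,\ldots$ encoding the composed state are partitioned between the two sub-implementations and invoking the induced-implementation construction described in the paragraph preceding the theorem. Everything else is essentially a matter of unwinding definitions, since local serializability is deliberately a weak, per-operation property that does not demand a single global serialization — which is exactly why it composes without friction, in contrast to full serializability, which is well known not to be compositional.
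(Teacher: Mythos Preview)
Your proposal is correct and follows essentially the same decomposition as the paper: handle linearizability via Herlihy--Wing compositionality, then establish local serializability by producing, for each operation $\pi$, a witness sequential history of $\id{IS}$. The only difference is that the paper constructs the witness for $\pi$ by interleaving a history $S_\pi^1\in\Sigma_{I_{S1}}$ with a history $S_\pi^2\in\Sigma_{I_{S2}}$, whereas you exploit the fact that $\pi$ lives entirely in one component and simply lift the single-component witness $S_1$ to $\id{IS}$ by leaving the other component idle; this is a mild simplification of the same argument, not a different route.
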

\begin{proof}
Let $H$, a history on $O_1\times O_2$,  be \LS-linearizable
with respect to $\id{IS}$. 
Let each $H|O_i$, 
$i\in\{1,2\}$, 
be \LS-linearizable with respect to $I_{Si}$. 
Without loss of generality,  we assume that $H$ is complete (if $H$
is incomplete, we consider any completion of it containing
\LS-linearizable completions of  $H|O_1$ and $H|O_1$).

Let $\tilde H$ be a completion of the high-level history corresponding to $H$ such that
$\tilde H|O_1$ and $\tilde H|O_2$ are linearizable with respect to $\tau_1$
and $\tau_2$, respectively. Since linearizability is
compositional~\cite{HW90,HS08-book}, $\tilde H$ is linearizable with respect to $\tau_1\times\tau_2$.

Now let, for each operation $\pi$, $S_{\pi}^1$ and $S_{\pi}^2$ be any two sequential histories of
$I_{S1}$ and $I_{S2}$  such that $H|\pi|O_j=S_{\pi}^j|\pi$, for $j \in \{1,2\}$
(since 
$H|O_1$
and $H|O_2$ are \LS-linearizable such histories exist).
We construct a sequential history $S_{\pi}$ by interleaving events of
$S_{\pi}^1$ and $S_{\pi}^2$ so that $S_{\pi}|O_j=S_{\pi}^j$, 
$j\in\{1,2\}$.
Since each $S_{\pi}^j$ acts on a distinct component $O_j$ of $O_1\times
O_2$, every such $S_{\pi}$ is a sequential history of $\id{IS}$.
We pick one $S_{\pi}$ that respects the local history $H|\pi$,
which is possible, since $H|\pi$ is consistent with both    
$S_1|\pi$ and $S_2|\pi$. 

Thus, for each $\pi$, we obtain a history of $\id{IS}$ that agrees with
$H|\pi$. Moreover, the high-level history of $H$ is linearizable. Thus, $H$ is \LS-linearizable with respect to $\id{IS}$. 
\end{proof}
%
Note that LS-linearizability is not non-blocking~\cite{HW90,HS08-book}: local
serializability may prevent an operation in a finite LS-linearizable
history from having a completion, e.g., because, it might read an
inconsistent system state caused by a concurrent incomplete operation.  

\paragraph{LS-linearizability versus other criteria}
LS-linearizability is a two-level consistency criterion which makes it
suitable to compare concurrent implementations of a sequential data
structure, regardless of synchronization techniques they use.
It is quite distinct from related criteria designed for database and software
transactions, such as serializability~\cite{Pap79-serial,WV02-book} and
multilevel serializability~\cite{Wei86,WV02-book}.

For example, serializability~\cite{Pap79-serial} prevents sequences of reads and writes from conflicting in a cyclic way, 
establishing a  global order of transactions.
Reasoning only at the level of reads and writes may be overly conservative:
higher-level operations may commute even if their reads and writes conflict~\cite{Wei88}.
Consider an execution of a concurrent \emph{list-based set} depicted in 
Figure~\ref{fig:ex1}.
We assume here that the set initial state is $\{1,3,4\}$.
Operation $\lit{contains}(5)$ is concurrent, first with 
operation $\lit{insert}(2)$ and then with operation $\textsf{insert}(5)$. 
The history is not serializable:
$\lit{insert}(5)$ sees 
the effect of $\lit{insert}(2)$ because $R(X_1)$ by $\lit{insert}(5)$ returns the value
of $X_1$ that is updated by $\lit{insert}(2)$ and
thus should be serialized after it. But $\lit{contains}(5)$ misses
element $2$ in the linked list, but must read the value of $X_4$ that is updated by
$\lit{insert}(5)$ to perform the read of $X_5$, \emph{i.e.}, the element created by $\lit{insert}(5)$.  
However, this history is LSL since each of the three local histories is consistent with some
sequential history of $\LL$. 

Multilevel serializability~\cite{Wei86,WV02-book} was 
proposed to reason in terms of multiple semantic levels in the same execution.
\LS-linearizability, being defined for two levels only, does not require a global serialization of low-level operations as
$2$-level serializability does. 
LS-linearizability simply requires each process  to observe a local serialization, which can be different from one
process to another. Also, to make it more suitable for concurrency
analysis of a concrete data structure, instead of semantic-based commutativity~\cite{Wei88}, we use the sequential
specification of the high-level behavior of the object~\cite{HW90}.

Linearizability~\cite{HW90,AW04} only accounts for high-level
behavior of a data structure,  so it does not imply
LS-linearizability. For example, Herlihy's universal
construction~\cite{Her91} provides a linearizable implementation for
any given object type, but does not guarantee that each execution locally appears
sequential with respect to any sequential implementation of the type.    
Local serializability, by itself, does not require any synchronization
between processes and can be trivially implemented without
communication among the processes.
Therefore, the two parts of LS-linearizability indeed complement each other.  
%
\subsection{Concurrency metric}
To characterize the ability of a concurrent implementation to process arbitrary interleavings of sequential code, we introduce 
the notion of a \emph{schedule}.
Intuitively, a schedule describes the order in which complete high-level
operations, and sequential reads and writes are invoked by the user. 
More precisely, a schedule is 
an equivalence class of complete histories that agree on
the \emph{order} of invocation and response events of reads, writes and high-level operations, but 
not necessarily on the responses of read operations or 
of high-level operations.
Thus, a schedule can be treated as a history, where responses of read
and high-level operations
are not specified. 

We say that an implementation $I$ \emph{accepts} a schedule $\sigma$ if 
it exports a history $H$ such that $\ms{complete}(H)$ exhibits
the order of $\sigma$, where $\ms{complete}(H)$ is the subsequence of $H$
that consists of the events of the complete operations that returned a matching response. 
We then say that the execution (or history) \emph{exports} $\sigma$. 
A schedule $\sigma$ is 
$(\ms{IS},\tau)$-LSL if there
exists an $(\id{IS},\tau)$-LSL history exporting $\sigma$.

An $(\id{IS},\tau)$-LSL implementation is therefore
\emph{concurrency-optimal} if it accepts all  $(\ms{IS},\tau)$-LSL schedules.

\ignore{

A \emph{synchronization technique} is a set of concurrent implementations.
We define below a specific optimistic synchronization technique and then
a specific pessimistic one.

\paragraph{The class $\mathcal{SM}$}
Let $\alpha$ denote the execution of a concurrent implementation and
$\ms{ops}(\alpha)$, 
the set of operations each of which performs at least one event in $\alpha$.
Let ${\alpha}^k$ denote the prefix of $\alpha$ up to the last event of operation $\pi_k$.
Let $\ms{Cseq}(\alpha)$ denote the set of subsequences of ${\alpha}$  that
consist of all the events of operations that are complete in $\alpha$. 
We say that $\alpha$ is \emph{strictly serializable} if 
there exists a legal sequential execution $\alpha'$ equivalent to
a sequence in $\sigma\in\ms{Cseq}(\alpha)$
such that $\rightarrow_{\sigma} \subseteq \rightarrow_{\alpha'}$. 

This paper focuses on optimistic implementations that are strictly
serializable and, in addition, guarantee that every operation (even aborted or incomplete) observes correct (serial)
behavior.  
More precisely, an execution $\alpha$ is  \emph{safe-strict serializable} if
(1) $\alpha$ is strictly serializable, and
(2) for each operation $\pi_k$, 
there exists a legal sequential execution
$\alpha'=\pi_0\cdots \pi_i\cdot \pi_k$ and
$\sigma\in\ms{Cseq}(\alpha^k)$ such that $\{\pi_0,\cdots, \pi_i\}
\subseteq \ms{ops}(\sigma)$ and $\forall \pi_m\in \ms{ops}(\alpha'):{\alpha'}|m={\alpha^k}|m$.

Safe-strict serializability captures nicely both local serializability
and linearizability. 
If we transform a sequential implementation
$\id{IS}$ of a type $\tau$ into a \emph{safe-strict serializable} concurrent one, 
we obtain an LSL implementation of $(\id{IS},\tau)$. 
%
Indeed, 
we make sure that completed operations witness the same execution of $\id{IS}$, and
every operation that returned $\bot$ is consistent with some
execution of $\id{IS}$ based on previously completed operations. 
Formally, $\mathcal{SM}$ denotes the set of optimistic, safe-strict serializable LSL
implementations.

\paragraph{The class $\mathcal{P}$}
This denotes the set of \emph{deadlock-free} pessimistic
LSL implementations: assuming that no process stops taking steps of
its algorithm in the middle of a high-level operation,
at least one of the concurrent operations return a matching response~\cite{HS11-progress}.
Note that $\mathcal{P}$ includes implementations that are not necessarily safe-strict serializable. 
}
\section{Search data structures}\label{sec:dirds}
%
In this section, we introduce a class $\D$ of \emph{dictionary-search}
data structures (or simply \emph{search structures}),
inspired by the study of \emph{dynamic databases} undertaken by
Chaudhri and Hadzilacos~\cite{CH98}. 

\paragraph{Data representation}
At a high level, a {\dd} structure is a dictionary that maintains data in a directed acyclic graph (DAG) with a
designated \emph{root} node (or element). The
vertices (or nodes) of the graph are key-value pairs and edges specify the
\emph{traversal function}, \emph{i.e}, paths that should be taken by the dictionary
operations in order to find the nodes that are going to determine the
effect of the operation.
Keys are natural numbers, values are taken from a set $V$ and the outgoing edges of each node are locally labelled. 
By a light abuse of notation we say that $G$ \emph{contains} both nodes and edges.
Key values of nodes in a DAG $G$ are related by a partial order $\prec_{G}$ 
that additionally defines a property $\mathbb{P}_G$ specifying if there is an outgoing edge from node with key $k$ to a node 
with key $k'$ (we say that $G$ \emph{respects} $\mathbb{P}_G$).

If $G$ contains a node $a$ with key $k$, 
the \emph{$k$-relevant} set of $G$, denoted $V_k(G)$,  
is $a$ plus all nodes $b$, such that $G$ contains $(b,a)$ or $(a,b)$.
If $G$ contains no nodes with key $k$, 
$V_k(G)$ consists of all nodes $a$ of $G$ with the smallest $k\prec_{G} k'$ plus all
nodes $b$, such that $(b,a)$ is in $G$.
The \emph{$k$-relevant graph of $G$}, denoted $R_k(G)$, is the
subgraph of $G$ that consists of all paths from the root to the nodes
in $V_k(G)$.




\begin{figure}
  \includegraphics[scale=0.45]{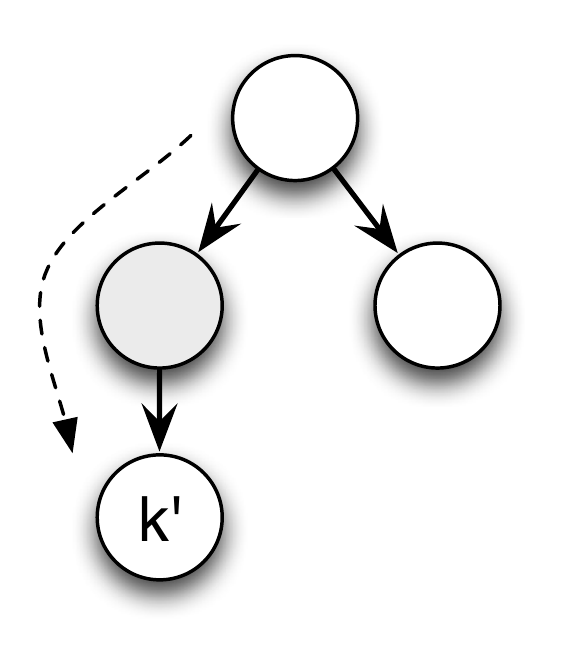}
  \includegraphics[scale=0.45]{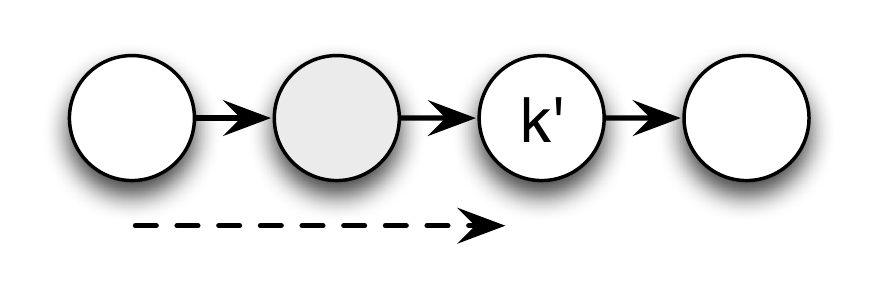}
  \includegraphics[scale=0.45]{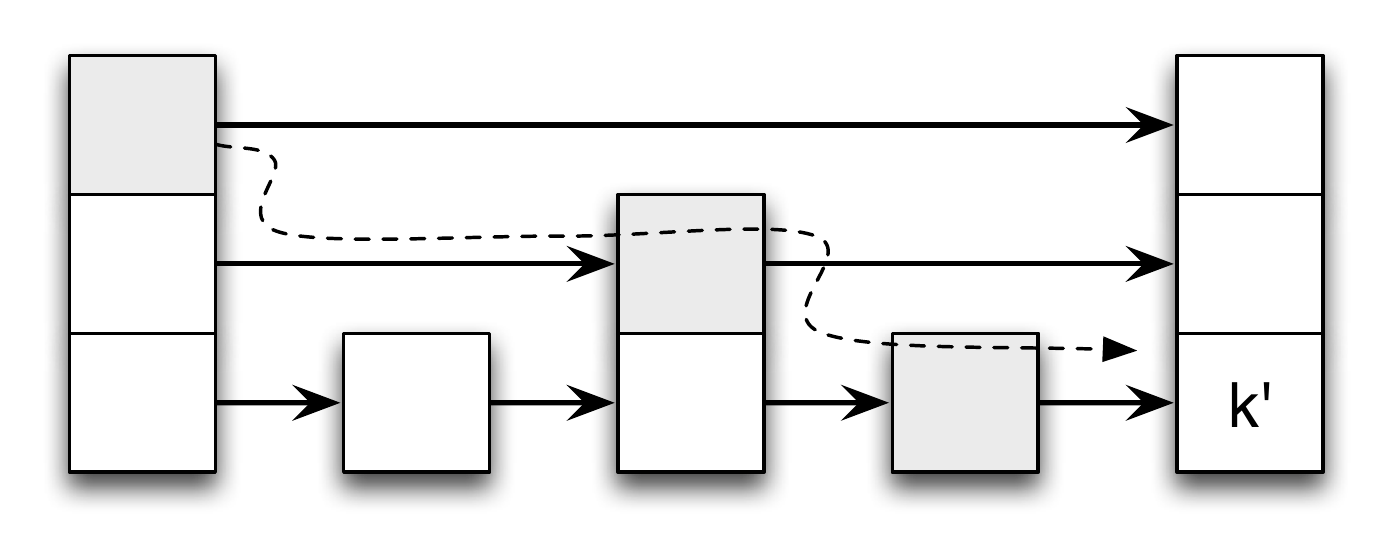}
  \caption{Three \dd{} structures, a binary tree, a linked list and a skip list, whose k-relevant set has grey nodes and whose k-relevant path is indicated with a dashed arrow\label{fig:dag-ds}}
\end{figure}
\paragraph{Sequential specification}
At a high level, every data structure in $\D$ exports a sequential
specification with the following operations:
\begin{itemize}
\item  \textit{insert}$(k,v)$ checks whether a node with key $k$ is already
  present and, if so, returns $\false$, otherwise it creates a node with key $k$ and value $v$,
  \emph{links} it to the graph (making it reachable from the
  root) and returns $\true$;    
\item \textit{delete}$(k)$ checks whether a node with key $k$ is already
  present and, if so, \emph{unlinks} the node from the graph (making
  it unreachable from the root) 
 and returns $\true$, otherwise it returns $\false$;    
\item \textit{find}$(k)$ returns the pointer to the node with key $k$ 
or $\false$ if no such node is found.   
%
\end{itemize} 
\paragraph{Traversals}
For each operation
$op\in\{\ms{insert}(k,v),\ms{delete}(k),$ $\ms{find}(k)\}_{k\in
  \Nat, v\in V}$, each search structure is parameterized by  
a (possibly randomized) \emph{traverse function} $\tau_{op}$.
Given the \emph{last visited} node $a$ and the DAG of already visited nodes $G_{op}$,
the traverse function $\tau_{op}$ returns a new node  
$b$ to be \emph{visited}, \emph{i.e.}, accessed to get its $key$ and the list of descendants, 
or $\emptyset$ to indicate that the search is complete.

\paragraph{Find, insert and delete operations}
Intuitively, the traverse function is used by the operation $op$ to explore
the {\dd} structure and, when the function returns $\emptyset$, the sub-DAG $G_{op}$
explored so far contains enough information for
operation $op$ to complete. 

If $op=\textit{find}(k)$,
$G_{op}$ either contains a node with key $k$ or ensures that the
whole graph does not contain $k$. 
As we discuss below, in \emph{sorted} search structures, such as sorted
linked-lists  or skiplists, we can stop as soon as all outgoing edges in $G_{op}$
belong to nodes with keys $k'\geq k$.  Indeed, the remaining nodes can
only contain keys greater than $k$, so $G_{op}$ contains enough
information for $op$ to complete.   

An operation $op=\textit{insert}(k,v)$,   
is characterized by an \emph{insert function} $\mu_{(k,v)}$ that, given
a DAG $G$ and a new node $\tup{k,v}\notin G$,  returns the set
of edges from nodes of $G$ to $\tup{k,v}$ and from $\tup{k,v}$ to nodes
of $G$ so that the resulting graph is a DAG containing
$\tup{k,v}$ and respects $\mathbb{P}_{G}$.

An operation $op= \textit{delete}(k)$,   
is characterized by a \emph{delete function} $\nu_{k}$ that, given
a DAG $G$, gives the set of edges to be removed and a set of
edges to be added in $G$ so that the resulting graph is a DAG that respects $\mathbb{P}_{G}$.
 
\paragraph{Sequential implementations}
We make the following natural  assumptions on the sequential implementation of a
{\dd} structure:

\begin{itemize}
\item \textit{Traverse-update} Every operation $op$ starts with the read-only
  \emph{traverse} phase followed with a write-only \emph{update} phase.  
The traverse phase of an operation $op$ with parameter $k$ completes at the latest when
for the visited nodes $G_{op}$ contains the $k$-relevant graph.
The update phase of a \emph{find}$(k)$ operation is empty. 

\item
\textit{Proper traversals and updates} For all DAGs $G_{op}$ and nodes $a\in G_{op}$, the traverse function
$\tau_{op}(a,G_{op})$ returns $b$ such that $(a,b)\in G$.
The update phase of an \emph{insert}$(k)$ or \emph{delete}$(k)$ operation modifies outgoing edges 
of $k$-relevant nodes.

\item
\textit{Non-triviality} There exist a key $k$ and a state $G$ such that 
(1) $G$ contains no node with key $k$, 
(2) If $G'$ is the state resulting after applying $\lit{insert}(k,v)$
to $G$, then there is exactly one edge $(a,b)$ in $G'$ such that
$b$ has key $k$, and (3)~the
shortest path in $G'$ from the root to $a$ is of length at least $2$.

%
\end{itemize}
The non-triviality property says that in some cases the
read-phase may detect the presence of a given key only at the last
step of a traverse-phase. Moreover, it excludes the pathological DAGs
in which all the nodes are always reachable in one hop from the root.
%
%
Moreover, the traverse-update property and the fact that keys are
natural numbers implies that every traverse
phase eventually terminates.  Indeed, there can be only finitely many
vertices pointing to a node with a given key, thus, eventually a
traverse operation explores enough nodes to be sure that no node with
a given key can be found.   
%
%


\paragraph{Examples of {\dd} data structures}
In Figure~\ref{fig:dag-ds}, we describe few data structures in $\D$.
A \emph{sorted linked list} maintains a single path,
starting at the \textit{root} sentinel node and ending at a \textit{tail}
sentinel node, and any traversal with parameter $k$ simply follows the path until a node
with key $k'\geq k$ is located. 
The traverse function for all operations follows the only path
possible in the graph until the two relevant nodes are located. 

A \emph{skiplist}~\cite{Pugh90} of $n$ nodes is organized as a series
of $O(\log n$) sorted linked lists, each specifying shortcuts of certain length.  
The bottom-level list contains all the nodes, each of the higher-level lists
contains a sublist of the lower-level list.  
A  traversal starts with the top-level list having the longest
``hops'' and goes to lower lists with smaller hops 
as the node with smallest key $k'\geq k$  get closer. 

A \emph{binary search tree} represents data items in the form of a binary
tree. Every node in the tree stores a key-value pair, and the left
descendant of a non-leaf node with key $k$ roots a subtree storing
all nodes with keys less than $k$, while the right
descendant roots a subtree storing all nodes with keys greater than $k$.
%
Note that, for simplicity, we do not consider \emph{rebalancing} operations used by
balanced trees for maintaining the desired bounds on the
traverse complexity. Though crucial in practice, the rebalancing operations are not
important for our comparative analysis of concurrency properties of
synchronization techniques.

\paragraph{Non-serializable concurrency}
There is a straightforward LSL implementation of any data structure in
$\D$ in which updates (\textit{inserts} and \textit{deletes}) acquire
a lock on the root node and are thus
sequential. Moreover, they take exclusive locks on the set of nodes
they are about to modify ($k$-relevant sets for operations with
parameter $k$).   

A \textit{find} operation uses \emph{hand-over-hand}
\emph{shared} locking~\cite{Wei88}: at each moment of time,
the operation holds shared locks on all outgoing edges for the
currently visited node $a$. To visit a new node $b$ (recall that $b$ must
be a descendant of $a$), it acquires shared locks
on the new node's descendants and then releases the shared lock on $a$.   
Note that just before a  $\textit{find}(k)$ operation returns the
result, it holds shared locks on the $k$-relevant set.   

This way updates always take place sequentially, in the order of their
acquisitions of the root lock.
A $\textit{find}(k)$ operation is linearized at any
point of its execution when it holds shared locks on the $k$-relevant set. 
Concurrent operations that do not contend on the same locks can be
arbitrarily ordered in a linearization.

The resulting \emph{HOH-find} implementation is described in
Algorithm~\ref{alg:hohfind}. 
The fact that the operations acquire (starvation-free) locks in the
order they traverse the directed acyclic graph implies that:   
%
\begin{algorithm*}[t]
\caption{Abstract \emph{HOH-find} implementation of a
  {\dd} structure defined by
$(\tau_{op}$, $\mu_{insert(k,v)}$, $\nu_{delete(k)})$,
  $op\in\{\textit{insert}(k,v),\textit{delete}(k),\textit{find}(k)\}$,
  $k\in\Nat$, $v\in V$.}
\label{alg:hohfind}
  \begin{algorithmic}[1]
  	\begin{multicols}{2}
  	{\scriptsize
	
	\Part{Shared variables}{
		\State $\mathcal{G}$, initially $\ms{root}$ \Comment{Shared DAG}
	}\EndPart
	
	\Statex

	\Part{$\lit{find}(k)$}{

	         \State $\ms{G} \gets \emptyset$; $a \gets \{\ms{root}\}$ 
                 \State $a.\ms{lock-shared}()$
                 \While{$a\neq \emptyset$} 
                       \State $\forall (a,b)\in\mathcal{G}$: $b.\ms{lock-shared}()$
                       \State $\forall (a,b)\in\mathcal{G}$: $G \gets G\cup (a,b)$
                       \Comment{Explore new edges}
                       \State $\ms{last} \gets a$ 
		       \State $a  \gets \tau_{\textit{find}(k)}(a,G)$
                       \State $\forall (\ms{last},b)\in\mathcal{G}$,
                       $b\neq a$: $b.\ms{unlock-shared}()$
                       \State $\ms{last}.\ms{unlock-shared}()$
	   	\EndWhile
		\If{$G$ contains a node with key $k$}
                       \Return $\true$
                        \EndReturn
                \Else  
                       \Return $\false$
                        \EndReturn
                \EndIf
   	}\EndPart

        \Statex

        \Part{$\lit{insert}(k,v)$}{
               \State $\ms{root}.\ms{lock}()$ 
     	         \State $\ms{G} \gets \emptyset$; $a \gets \{\ms{root}\}$ 
           \While{$a\neq \emptyset$} 
                      \State $\forall (a,b)\in\mathcal{G}$: $G \gets G\cup (a,b)$
                       \Comment{Explore new edges}
                       \State $\ms{last} \gets a$ 
		       \State $a  \gets \tau_{\textit{insert}(k,v)}(a,G)$
                       \State $\forall (\ms{last},b)\in\mathcal{G}$,
  	   	\EndWhile 

		 \If{$G$ contains no node with key $k$}
                	\State $a \gets\textit{create-node}(k,v)$
                        \State $\forall b$ such that
                        $\exists (b,c)\in\mu_{k}(G,a)$,  $b.\ms{lock}()$  
                        \State $\mathcal{G} \gets
                         \mathcal{G}\cup\mu_{(k,v)}(G,a)$
                        \Comment{Link $a$ to $\mathcal{G}$}
                       \State $\forall b$ such that
                        $\exists (b,c)\in\mu_{k}(G,a)$,  $b.\ms{unlock}()$  
                         \State $\ms{root}.\ms{unlock}()$
                        \Return $\true$
                        \EndReturn
                \Else  
                        \State  $\ms{root}.\ms{unlock}()$      
                        \Return $\false$
                        \EndReturn
                \EndIf
   	}\EndPart

	\newpage
	
     \Statex
     \Statex
     \Statex

      \Part{$\lit{delete}(k)$}{
               \State $\ms{root}.\ms{lock}()$  
     	         \State $\ms{G} \gets \emptyset$; $a \gets \{\ms{root}\}$ 
         \While{$a\neq \emptyset$} 
                       \State $\forall (a,b)\in\mathcal{G}$: $G \gets G\cup (a,b)$
                       \Comment{Explore new edges}
                       \State $\ms{last} \gets a$ 
		       \State $a  \gets \tau_{\textit{delete}(k)}(a,G)$
                       \State $\forall (\ms{last},b)\in\mathcal{G}$,
	   	\EndWhile 

		\If{$G$ contains node $a$ with key $k$}
                	\State  remove $a$ and all edges to/from $a$ from $\mathcal{G}$ 	
                       \State $\forall b$  such that
                         $\exists (b,c)\in\nu_{k}(G,a)$,  $b.\ms{lock}()$  
                        \State $\mathcal{G}
                        \gets\mathcal{G}\cup\nu_{k}(G,a)$ \Comment{Shortcut edges}
                        \State $\forall b$ such that
                        $\exists (b,c)\in\nu_{k}(G,a)$,  $b.\ms{unlock}()$  
                         \State $\ms{root}.\ms{unlock}()$
                        \Return $\true$
                        \EndReturn
                \Else  
                        \State  $\ms{root}.\ms{unlock}()$      
                        \Return $\false$
                        \EndReturn
                \EndIf
   	}\EndPart
	
	}
	\end{multicols}
  \end{algorithmic}
\end{algorithm*}
\begin{theorem}\label{th:nonser}
The \emph{HOH-find} algorithm is a starvation-free LSL implementation of a
{\dd} structure. 
\end{theorem}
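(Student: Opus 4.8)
By Definition~\ref{def:lin}, the statement amounts to three claims about \emph{HOH-find}: every operation returns (starvation-freedom), every exported high-level history is linearizable with respect to the type $\tau$ of the {\dd} structure, and every exported history is locally serializable with respect to $\id{IS}$. I would establish them in that order; the last two together are exactly \LS-linearizability.

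\emph{Starvation-freedom.} Two facts suffice: traverse phases terminate, and there are no deadlocks. Termination of traverses is already observed above (the traverse-update property together with keys being natural numbers bounds the number of nodes a traverse visits). For deadlock-freedom, I would argue that every operation acquires its locks in an order consistent with a single partial order: an update first takes the root lock and then takes exclusive locks on $k$-relevant nodes in the order it traversed them, while a \textit{find} takes only shared locks and, by the \emph{proper traversals} assumption, follows only edges that are genuinely present, hence descends the DAG. Since the DAG is acyclic at every instant and updates are serialized by the root lock (so the sequence of DAG versions is well defined), no set of operations can form a circular wait: shared locks never block one another, and any exclusive lock is held by an update that, holding the root lock and running a terminating traverse plus a finite update phase, eventually releases it. Starvation-freedom of the lock primitives then lifts to high-level operations.

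\emph{Linearizability.} I would linearize each update at the instant it acquires the root lock, and each \textit{find}$(k)$ at the first instant its accumulated subgraph contains the $k$-relevant graph---at which point, by the hand-over-hand discipline, it holds shared locks on the whole $k$-relevant set $V_k(G)$ of the current DAG $G$. These instants lie inside the respective operation intervals, so $\rightarrow_{\tilde H}\subseteq\rightarrow_S$. Legality follows because updates are totally ordered by root-lock acquisitions and, since only one update runs at a time and searches never write, each update executes its traverse and its update phase against a \emph{constant} DAG, so the induced sequence of states is a genuine sequential run of $\id{IS}$; and because a \textit{find}$(k)$ at its linearization point holds shared locks on $V_k(G)$, so that by the \emph{proper updates} assumption no concurrent $\textit{insert}(k,\cdot)$ or $\textit{delete}(k)$ is altering the $k$-relevant nodes, while the search has just read $R_k(G)$ faithfully (each edge under the shared lock of its source); hence the search returns the value prescribed by $G$ and the resulting sequential history is legal.

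\emph{Local serializability.} For an update the witness $S\in\Sigma_{\id{IS}}$ is immediate: the DAG is constant throughout the update's execution, so its read--write sequence \emph{is} a sequential run of $\id{IS}$ from that DAG. For a \textit{find}$(k)$ I would take $S$ to be a run of $\id{IS}$ that drives the structure, via a sequence of updates, to a reachable state $G'$ equal to the subgraph the search observed, followed by $\textit{find}(k)$; an induction on traversal steps then shows that $\id{IS}$'s \textit{find}$(k)$ on $G'$ reproduces $H|\pi$, using that each traversed edge was genuinely present under the shared lock of its source (proper traversals), that the traverse function is deterministic in the last visited node and the accumulated subgraph, and that the traverse stops exactly when that subgraph contains the $k$-relevant graph (traverse-update). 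I expect the crux---the step deserving the most care---to be showing that this observed subgraph $G'$ is itself a \emph{reachable} state: because of hand-over-hand locking the search may read edges near the root at an earlier time than edges near the leaves, so $G'$ is a ``cut through time'' of the execution, and one must invoke the partial-order structure of class $\D$ (updates acting on disjoint $k$-relevant regions commute, and the search descends the DAG in topological order) to realize that cut as the $k$-relevant graph of a genuinely reachable DAG. Once that is in place, $H|\pi=S|\pi$ for searches as well, the history is locally serializable, and together with linearizability this yields \LS-linearizability; combined with starvation-freedom, \emph{HOH-find} is the claimed implementation.
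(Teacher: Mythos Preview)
Your plan is sound and, in several places, considerably more careful than the paper's own argument. The decomposition into starvation-freedom, linearizability, and local serializability is the same, and your treatment of updates (serialize by root-lock order; each update sees a constant DAG) matches the paper exactly. Two differences are worth noting.

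First, for \textit{find} operations the paper does \emph{not} pick a real-time linearization instant. Instead it orders all updates by root-lock acquisition, observes that a $\textit{find}(k)$ returning $\true$ (resp.\ $\false$) must have witnessed a state in which a node with key $k$ was reachable (resp.\ unreachable) at some moment of its interval, and then \emph{inserts} the find into the update sequence at the latest position consistent with both real-time order and the find's response. Your approach---linearize at the instant the search holds shared locks on $V_k(G)$---is the more standard ``linearization point'' style and also works, but it relies on the pre-theorem remark that just before returning the find holds shared locks on the $k$-relevant set; if you flesh this out, be careful that hand-over-hand releases a predecessor's lock once you step past it, so ``holds locks on the whole $V_k(G)$'' needs to be read as holding the relevant frontier, not all predecessors simultaneously. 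The paper's insertion argument sidesteps this subtlety.

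Second, for local serializability of finds the paper simply asserts that, because the current node and its outgoing edges are shared-locked while updates exclusive-lock their $k$-relevant sets, a find ``observes the effects of updates as though they took place in a sequential execution.'' You correctly isolate what this glosses over---that the ``cut through time'' observed by a hand-over-hand traversal is in fact the $k$-relevant graph of some reachable state---and sketch the commutation argument needed to justify it. That is the right crux, and your plan for it is more honest than the paper's one-line appeal. Your explicit starvation-freedom argument (acyclic lock order plus starvation-free primitives) is likewise absent from the paper's proof and is a welcome addition.
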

\begin{proof}
Take any execution $E$ of the algorithm. 
The subsequence of $E$ consisting of the events of update operations
is serializable (and, thus, locally serializable).
Since a $\textit{find}(k)$ operation protects its visited node and all its
outgoing edges with a shared lock and a concurrent update with a key protect
their $k$-relevant sets with an exclusive lock, $\textit{find}(k)$
observes the effects of updates as though they took place in a
sequential execution---thus local serializability. 

Let $H$ be the history of $E$. To construct a linearization of $H$, we
start with a sequential history $S$ that
orders all update operations in $H$ in the order in which they acquire
locks on the root. By construction, $S$ is legal.
A $\textit{find}(k)$ operation that returns $\true$ can only reach a node if
a node with key $k$ 
was reachable from the root at some point
during its interval. 
Similarly, if $\textit{find}(k)$ operation returns $\false$, then it would only fail to reach a node if it
was made unreachable from the root at some point during its interval.
Thus, every successful (resp., unsuccessful) $\textit{find}(k)$ operation $op$ can be inserted in $S$ after the
latest update operation that does not succeed in the real-time order
in $E$ and after which a node $k$ is reachable (resp., unreachable). 
By construction, the resulting sequential history is legal. 
\end{proof}
As we show in Section~\ref{sec:pessimistic}, the implementation is
however not (safe-strict) serializable.
%
%
%

%
\section{Pessimism vs. serializable optimism}
\label{sec:pessimistic}
In this section, we show that, with respect to
\dd{} structures, pessimistic locking and optimistic synchronization 
providing safe-strict serializability are \emph{incomparable}, once we focus on
LS-linearizable implementations. 

\ignore{
More precisely, for any \dd{} structure, 
\begin{enumerate}
\item[\RNum{1}]There exists a schedule that is
rejected by \emph{any} pessimistic implementation, but accepted by
certain optimistic strictly serializable ones \label{res1}, and 
\item[\RNum{2}]There exists a schedule that is
rejected by \emph{any} serializable implementation but accepted by
a certain pessimistic one (\emph{HOH-find}, to be concrete) \label{res2}.  
\end{enumerate}
}

\subsection{Classes $\mathcal{P}$ and $\mathcal{SM}$}

A \emph{synchronization technique} is a set of concurrent implementations.
We define below a specific optimistic synchronization technique and then
a specific pessimistic one.
%

\paragraph{$\mathcal{SM}$: serializable optimistic}
Let $\alpha$ denote the execution of a concurrent implementation and
$\ms{ops}(\alpha)$, 
the set of operations each of which performs at least one event in $\alpha$.
Let ${\alpha}^k$ denote the prefix of $\alpha$ up to the last event of operation $\pi_k$.
Let $\ms{Cseq}(\alpha)$ denote the set of subsequences of ${\alpha}$  that
consist of all the events of operations that are complete in $\alpha$. 
We say that $\alpha$ is \emph{strictly serializable} if 
there exists a legal sequential execution $\alpha'$ equivalent to
a sequence in $\sigma\in\ms{Cseq}(\alpha)$
such that $\rightarrow_{\sigma} \subseteq \rightarrow_{\alpha'}$. 

This paper focuses on optimistic implementations that are strictly
serializable and, in addition, guarantee that every operation (even aborted or incomplete) observes correct (serial)
behavior.  
More precisely, an execution $\alpha$ is  \emph{safe-strict serializable} if
(1) $\alpha$ is strictly serializable, and
(2) for each operation $\pi_k$, 
there exists a legal sequential execution
$\alpha'=\pi_0\cdots \pi_i\cdot \pi_k$ and
$\sigma\in\ms{Cseq}(\alpha^k)$ such that $\{\pi_0,\cdots, \pi_i\}
\subseteq \ms{ops}(\sigma)$ and $\forall \pi_m\in \ms{ops}(\alpha'):{\alpha'}|m={\alpha^k}|m$.

Safe-strict serializability captures nicely both local serializability
and linearizability. 
If we transform a sequential implementation
$\id{IS}$ of a type $\tau$ into a \emph{safe-strict serializable} concurrent one, 
we obtain an LSL implementation of $(\id{IS},\tau)$. 
Thus, the following lemma is immediate.
 \begin{lemma}
 Let $I$ be a safe-strict serializable implementation of $(\id{IS},\tau)$.
 Then, $I$ is \LS-linearizable with respect to $(\id{IS},\tau)$.
 \end{lemma}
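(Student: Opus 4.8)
The plan is to unfold the two defining conditions of safe-strict serializability and show that each one delivers, respectively, local serializability and linearizability, so that together they yield LS-linearizability with respect to $(\id{IS},\tau)$. Fix an arbitrary execution $\alpha$ of $I$ and let $H$ be the history it exports. We must exhibit, for every high-level operation $\pi_k$ in $H$, a history $S\in\Sigma_{\id{IS}}$ with $H|\pi_k = S|\pi_k$, and we must show the high-level history $\tilde H$ is linearizable with respect to $\tau$.

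First I would handle local serializability using condition (2). For a fixed operation $\pi_k$, condition (2) hands us a \emph{legal} sequential execution $\alpha' = \pi_0\cdots\pi_i\cdot\pi_k$ together with a $\sigma\in\ms{Cseq}(\alpha^k)$ such that $\{\pi_0,\dots,\pi_i\}\subseteq\ms{ops}(\sigma)$ and $\alpha'|m = \alpha^k|m$ for every $\pi_m\in\ms{ops}(\alpha')$; in particular $\alpha'|k = \alpha^k|k = \alpha|k$ (the last equality because $\alpha^k$ ends with the last event of $\pi_k$). Since $\alpha'$ is a legal sequential execution consisting of complete operations, and each operation of $I$ follows the read-write algorithm prescribed by $\id{IS}$ between restarts, the read-write projection of $\alpha'$ is a sequential execution of $\id{IS}$; hence its exported history $S$ lies in $\Sigma_{\id{IS}}$. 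Restricting to $\pi_k$ and discarding the aborted last read/write as in the definition of $H|\pi_k$, we get $H|\pi_k = S|\pi_k$. This is exactly local serializability of $H$ with respect to $\id{IS}$.

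Next I would handle linearizability using condition (1). Strict serializability gives a legal sequential execution $\alpha''$ equivalent to some $\sigma\in\ms{Cseq}(\alpha)$ with $\rightarrow_\sigma\ \subseteq\ \rightarrow_{\alpha''}$; projecting $\alpha''$ to its high-level invocations and responses yields a sequential high-level history $S$ that is equivalent to $\tilde H$ restricted to complete operations, respects the real-time order $\rightarrow_{\tilde H}$, and — because $\alpha''$ is legal and the read-write algorithms of $\id{IS}$ return responses satisfying $\delta$ — is consistent with the sequential specification of $\tau$. Incomplete operations in $\tilde H$ are dropped (this is permitted by the completion clause in the definition of linearizability). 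Hence $\tilde H$ is linearizable with respect to $\tau$. Combining the two parts, every exported history of $I$ is $(\id{IS},\tau)$-LSL, so $I$ is LS-linearizable with respect to $(\id{IS},\tau)$.

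The main obstacle is the bookkeeping around aborted and incomplete operations: one must check carefully that the $\ms{Cseq}$/$\ms{ops}$ machinery behind safe-strict serializability lines up with the $H|\pi$ convention (which chops off the last aborted read or write) and with the completion clause of linearizability, so that the sequential witnesses extracted from $\alpha'$ and $\alpha''$ are genuinely histories of $\id{IS}$ and genuinely match the local histories. Everything else is essentially a direct translation of the definitions, which is why the paper states the lemma as immediate.
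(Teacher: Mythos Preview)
Your proposal is correct and follows the same approach the paper takes: the paper declares the lemma ``immediate'' and offers only a one-sentence justification (``completed operations witness the same execution of $\id{IS}$, and every operation that returned $\bot$ is consistent with some execution of $\id{IS}$ based on previously completed operations''), which is exactly your decomposition into condition~(1) $\Rightarrow$ linearizability and condition~(2) $\Rightarrow$ local serializability, just spelled out in more detail. Your closing paragraph correctly identifies the only place where care is needed, namely reconciling the $\ms{Cseq}$/$\alpha^k$ machinery with the $H|\pi$ convention for aborted operations.
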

Indeed, 
we make sure that completed operations witness the same execution of $\id{IS}$, and
every operation that returned $\bot$ is consistent with some
execution of $\id{IS}$ based on previously completed operations. 
Formally, $\mathcal{SM}$ denotes the set of optimistic, safe-strict serializable LSL
implementations.

\paragraph{$\mathcal{P}$: deadlock-free pessimistic}
Assuming that no process stops taking steps of
its algorithm in the middle of a high-level operation,
at least one of the concurrent operations return a matching response~\cite{HS11-progress}.
Note that $\mathcal{P}$ includes implementations that are not necessarily safe-strict serializable. 

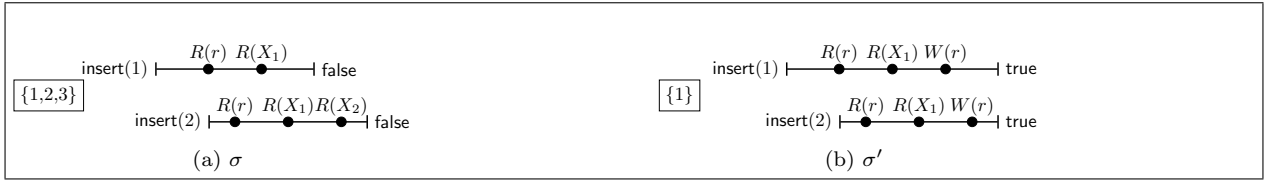
\begin{figure*}[t]
 \subfloat[$\sigma$\label{sfig:inv-1}]{\scalebox{0.7}[0.7]{\begin{tikzpicture}
\node (r1) at (1,0) [] {};
\node (r2) at (2,0) [] {};

\node (r3) at (1.5,-1) [] {};
\node (r4) at (2.5,-1) [] {};
\node (r5) at (3.5,-1) [] {};

\draw (r1) node [above] { {$R(r)$}};
\draw (r2) node [above] { {$R(X_1)$}};

\draw (r3) node [above] { {$R(r)$}};
\draw (r4) node [above] { {$R(X_1)$}};
\draw (r5) node [above] { {$R(X_2)$}};

\begin{scope}   
\draw [|-|,thick] (0,0) node[left] {$\lit{insert}(1)$} to (3,0) node[right] {$\lit{false}$};
\draw  [fill=black, radius=0.1]  (1,0) circle (.6ex);
\draw  [fill=black, radius=0.1]  (2,0) circle (.6ex);

\draw [|-|,thick] (1,-1) node[left] {$\lit{insert}(2)$} to (4,-1) node[right] {$\lit{false}$};
\draw  [fill=black, radius=0.1]  (1.5,-1) circle (.6ex);
\draw  [fill=black, radius=0.1]  (2.5,-1) circle (.6ex);
\draw  [fill=black, radius=0.1]  (3.5,-1) circle (.6ex);

\end{scope}
\node[draw,align=right] at (-2,-.5) { \{1,2,3\}};
\end{tikzpicture}}}\hspace{30mm}
 \subfloat[$\sigma'$\label{sfig:inv-2}]{\scalebox{0.7}[0.7]{\begin{tikzpicture}
\node (r1) at (1,0) [] {};
\node (r2) at (2,0) [] {};
\node (w1) at (3,0) [] {};

\node (r3) at (1.5,-1) [] {};
\node (r4) at (2.5,-1) [] {};
\node (r5) at (3.5,-1) [] {};

\draw (r1) node [above] { {$R(r)$}};
\draw (r2) node [above] { {$R(X_1)$}};
\draw (w1) node [above] { {$W(r)$}};

\draw (r3) node [above] { {$R(r)$}};
\draw (r4) node [above] { {$R(X_1)$}};
\draw (r5) node [above] { {$W(r)$}};

\begin{scope}   
\draw [|-|,thick] (0,0) node[left] {$\lit{insert}(1)$} to (4,0) node[right] {$\lit{true}$};
\draw  [fill=black, radius=0.1]  (1,0) circle (.6ex);
\draw  [fill=black, radius=0.1]  (2,0) circle (.6ex);
\draw  [fill=black, radius=0.1]  (3,0) circle (.6ex);

\draw [|-|,thick] (1,-1) node[left] {$\lit{insert}(2)$} to (4,-1) node[right] {$\lit{true}$};
\draw  [fill=black, radius=0.1]  (1.5,-1) circle (.6ex);
\draw  [fill=black, radius=0.1]  (2.5,-1) circle (.6ex);
\draw  [fill=black, radius=0.1]  (3.5,-1) circle (.6ex);

\end{scope}
\node[draw,align=right] at (-2,-.5) { \{1\}};
\end{tikzpicture}}}
 \caption{\small{%
(a) a history of integer set (implemented as linked list or binary search tree) exporting schedule $\sigma$, with initial state
   $\{1,2,3\}$ ($r$ denotes the root node); 
(b) a history exporting a problematic schedule $\sigma'$, with initial state 
   $\{3\}$, which should be accepted by any $I\in\mathcal{P}$ if it accepts $\sigma$}}\label{fig:ex2}%
\vspace{-0.35mm}
\end{figure*}
\subsection{Suboptimality of pessimistic implementations}

 We show now that for any \dd{} structure, there exists a schedule that is
rejected by \emph{any} pessimistic implementation, but accepted by
certain optimistic strictly serializable ones.
To prove this claim, we derive a safe-strict serializable schedule that cannot be accepted by any
implementation in $\P$ using the \emph{non-triviality} property of
{\dd} structures. It turns out that we can schedule the traverse
phases of two $\textit{insert}(k)$ operations in parallel until they
are about to check if a node with key $k$ is in the set or not. 
If it is, both operations may safely return $\false$ (schedule $\sigma$). However, if the node
is not in the set, in a pessimistic implementation, both operations
would have to modify outgoing edges of the same node $a$ and, if we
want to provide local serializability, both return $\true$, violating
linearizability (schedule $\sigma'$). 

In contrast, an optimistic implementation may simply abort one of the
two operations in case of such a conflict, by accepting the (correct) schedule $\sigma$
and rejecting the (incorrect) schedule $\sigma'$.  

\textbf{Proof intuition.}
We first provide an intuition of our results in the context of the \emph{integer set} implemented as a 
\emph{sorted linked list} or \emph{binary search tree}.
The set type is a special case of the dictionary which stores a set of integer values,
initially empty, and exports
operations $\textit{insert}(v)$, $\textit{remove}(v)$, $\textit{find}(v)$; $v \in \mathbb{Z}$.
The update operations, $\textit{insert}(v)$ and $\textit{remove}(v)$, return
a boolean response, $\true$ if and only if $v$ is absent (for
$\textit{insert}(v)$) or present (for $\textit{remove}(v)$) in the set.  
After $\textit{insert}(v)$ is complete, $v$ is present in the set, and 
after $\textit{remove}(v)$ is complete, $v$ is absent in the set.
The $\textit{find}(v)\}$ returns a boolean a boolean, $\true$ if and
only if $v$ is present in the set.

An example of schedules $\sigma$ and $\sigma'$ of the set is
given in Figure~\ref{fig:ex2}.
We show that the schedule $\sigma$ depicted
in Figure~\ref{fig:ex2}(a) is not accepted by any implementation in $\mathcal{P}$.
Suppose the contrary and let $\sigma$ be exported by an execution $\alpha$. 
Here $\alpha$ starts with three sequential $\lit{insert}$ operations with
parameters $1$, $2$, and $3$. The resulting ``state'' of the set is
$\{1,2,3\}$, where value $i\in \{1,2,3\}$ is stored in node $X_i$.   
Suppose, by contradiction, that some $I\in \mathcal{P}$ accepts $\sigma$. 
We show that $I$ then accepts the schedule $\sigma'$ depicted in Figure~\ref{fig:ex2}(b), which starts with a sequential
execution of $\lit{insert}(3)$ storing value $3$ in node $X_1$. 
We can further extend $\sigma'$ with a complete $\lit{find}(1)$ (by deadlock-freedom of $\mathcal{P}$)
that will return $\false$ (the node inserted to the list by $\lit{insert}(1)$
is lost)---a contradiction since $I$ is linearizable with respect to \emph{set}. 

The formal proof follows.
\begin{theorem}
\label{th:mpl}
Any abstraction in $\D$ has a strictly serializable schedule 
that is not accepted by \emph{any} implementation in
$\mathcal{P}$, but accepted by an implementation in $\mathcal{SM}$.
\end{theorem}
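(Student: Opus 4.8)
The plan is to generalize the linked-list/binary-search-tree intuition sketched above to an arbitrary $\D$-structure, using the \emph{non-triviality} property as the only structural hypothesis. First I would fix a key $k$ and a state $G$ witnessing non-triviality: $G$ contains no node with key $k$, inserting $\tup{k,v}$ creates exactly one new edge $(a,b)$ with $b$ having key $k$, and the shortest path from the root to $a$ in the resulting graph has length at least $2$. I then describe the target schedule $\sigma$: start from a sequential prefix bringing the structure to state $G$ (built by a sequence of completed \textit{insert} operations, which exists since $G$ is reachable), and then interleave the \emph{traverse phases} of two operations $\pi_1=\textit{insert}(k,v_1)$ and $\pi_2=\textit{insert}(k,v_2)$ invoked by distinct processes. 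By the traverse-update property and properness, each traverse phase reads exactly the nodes on the $k$-relevant graph $R_k(G)$; since $G$ contains no node with key $k$, both operations discover this fact only after reading the $k$-relevant set (this is where non-triviality and the "shortest path $\geq 2$" clause are used: the read of node $a$, reachable only in $\geq 2$ hops, is genuinely part of the traversal and cannot be short-circuited). In $\sigma$ both operations complete their traversals and then return $\false$ — this is legitimate, because a local serialization for each $\pi_i$ is any sequential execution of $\id{IS}$ from $G$ in which no node with key $k$ is present, and it is linearizable since two $\textit{insert}(k,\cdot)$ calls both returning $\false$ is consistent with the set/dictionary semantics when $k$ is absent and then inserted by neither — wait, that requires care; the correct reading is that $\sigma$ is a \emph{schedule}, so responses are unconstrained, and what I must show is (i) $\sigma$ is $(\id{IS},\tau)$-LSL, realized by a history in which a node with key $k$ \emph{is already present}, as in Figure~\ref{fig:ex2}(a), so both return $\false$ consistently, and (ii) $\sigma$ is accepted by an $\mathcal{SM}$ implementation but by no $\mathcal{P}$ implementation.

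For the $\mathcal{SM}$ side I would invoke the existence of a safe-strict serializable (e.g.\ TM-based) implementation of $\id{IS}$: run $\pi_1$'s traverse, then $\pi_2$'s traverse, detect the write-write conflict on node $a$ at commit time, abort $\pi_2$ and let $\pi_1$ commit; by the earlier lemma this is LS-linearizable, and it exports exactly the order of $\sigma$ on the complete (non-aborted) operations. For the $\mathcal{P}$ side — the heart of the argument — suppose some deadlock-free pessimistic $I\in\mathcal{P}$ accepts $\sigma$, via execution $\alpha$. Since $I$ is pessimistic, no operation aborts, so in $\alpha$ both $\pi_1$ and $\pi_2$ run to completion following the sequential code $\id{IS}$; having traversed $R_k(G)$ and found no key-$k$ node, by the traverse-update and proper-updates properties each must proceed to its update phase and modify the outgoing edges of the $k$-relevant nodes — in particular write the new edge out of $a$. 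The key move is a schedule-surgery / indistinguishability argument: I construct a second schedule $\sigma'$, identical to $\sigma$ except that the sequential prefix instead brings the structure to a state $G'$ that is \emph{indistinguishable from $G$ along the traversal both operations perform} but in which the node playing the role of $a$'s target already realizes key $k$ differently (the $\{3\}$ vs.\ $\{1,2,3\}$ trick of Figure~\ref{fig:ex2}), so that local serializability now \emph{forces} both $\pi_1$ and $\pi_2$ to return $\true$ in any history exporting $\sigma'$. Because $I$ is pessimistic and the traverse phases are read-only, $I$'s behavior on the traverse phases of $\sigma'$ is dictated by the same code and the same observed local values as in $\sigma$; hence $I$ accepts $\sigma'$ as well, and by deadlock-freedom we may extend it with a complete $\textit{find}(k)$ (or $\textit{find}$ of the lost key). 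But then two $\textit{insert}(k,\cdot)$ operations both returned $\true$ with no intervening delete, which no linearization with respect to $\tau$ admits — contradicting that $I$ is $(\id{IS},\tau)$-LSL.

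The main obstacle I anticipate is making the "indistinguishability from $\sigma$ to $\sigma'$" step airtight in the general DAG setting: I must argue that an arbitrary pessimistic implementation — which may use locks in clever, semantics-aware ways on the root and on $k$-relevant nodes — cannot distinguish $G$ from $G'$ during the \emph{traverse} phase and therefore cannot refuse to let both operations reach their update phase in $\sigma'$ once it allowed the interleaved traversals in $\sigma$. This requires pinning down precisely what "the same schedule" constrains: the schedule fixes the \emph{order} of read/write invocations and responses but not read return values, so I need to choose $G$ and $G'$ so that (a) the traverse functions $\tau_{\textit{insert}(k,\cdot)}$ visit the same sequence of nodes in both (possible because traversal is determined by the graph topology along the $k$-path, which I keep identical), and (b) the pessimistic implementation, proceeding irrevocably, commits to granting whatever locks it granted in $\sigma$ before it can detect any difference — so the two update phases are already "in flight" and, by deadlock-freedom plus local serializability, must both complete with $\true$. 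I would isolate this as a lemma: \emph{any pessimistic LSL implementation that accepts the interleaved-traverse schedule $\sigma$ of two $\textit{insert}(k,\cdot)$ operations on a non-trivial $\D$-structure also accepts the corresponding $\sigma'$}, and then derive the theorem from it together with the $\mathcal{SM}$ construction. The existence of suitable $G$ and $G'$ is exactly what non-triviality was designed to guarantee, so the structural input is already in place; the work is in the operational argument about irrevocability.
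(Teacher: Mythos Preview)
Your overall strategy---use non-triviality to build two indistinguishable traverse prefixes and argue that a pessimistic implementation accepting the ``good'' one is forced to accept the ``bad'' one---is exactly the paper's. But you have the roles of the two states inverted, and this inversion propagates into a wrong $\mathcal{SM}$ argument and a backwards $\mathcal{P}$ argument.

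The target schedule $\sigma$ (the strictly serializable one that $\mathcal{SM}$ accepts and $\mathcal{P}$ rejects) must start from the state in which a node with key $k$ \emph{is} present---that is, from $G'$ in the non-triviality clause, not $G$. From $G'$ both $\textit{insert}(k,\cdot)$ operations are read-only: each traverses, sees $b$, and returns $\false$. There is no write, hence no conflict; a progressive TM commits both and exports $\sigma$. Your $\mathcal{SM}$ argument (``abort $\pi_2$ and let $\pi_1$ commit'') is for the wrong scenario and, more importantly, does not establish acceptance of $\sigma$: if $\pi_2$ is aborted it is absent from $\ms{complete}(H)$, so the exported schedule has only one insert and does not match $\sigma$.

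The companion schedule $\sigma'$ starts from $G$ (no node with key $k$). Here both operations, after an identical traverse prefix (identical because $G$ and $G'$ differ only at $a$'s outgoing edges and at $b$, which are reached last), must enter their update phase; by local serializability each must write at $a$ and return $\true$, and two $\true$ responses violate linearizability. The indistinguishability direction you need is therefore: if $I\in\mathcal{P}$ accepts $\sigma$ (from $G'$), then, because the traverse prefixes are the same sequence of reads on the same nodes and $I$ cannot abort, $I$ also accepts $\sigma'$ (from $G$)---contradiction. Your write-up runs this implication the other way and places the ``found no key-$k$ node, proceed to update'' reasoning inside $\sigma$ rather than $\sigma'$. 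Once you swap the roles of $G$ and $G'$ throughout, the argument you sketch in your final paragraph (and the lemma you propose to isolate) is exactly what is required.
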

\begin{proof}
Consider the key $k$ and the states $G$ and $G'$ satisfying the conditions of
the \textit{non-triviality} property.

Consider the schedule that begins with a serial sequence of operations
bringing the {\dd} to state $G$   (executed by a single process). 
Then schedule the traverse phases of two identical $\textit{insert}(k,v)$ operations  executed by
\emph{new} (not yet participating) processes $p_1$ and $p_2$ concurrently so that they
perform identical steps (assuming that, if they take randomized steps,
their coin tosses return the same values).   
Such an execution $E$ exists, since the traverse phases are
read-only.
But if we allow both insert operations to proceed (by deadlock-freedom), we obtain an execution
that is not LS-linearizable: both operations update the data
structure which can only happen in a successful insert. 
But, by the sequential specification of $D$, since node with
key $k$ belongs to $G$, at least one of
the two inserts must fail.   
Therefore, a pessimistic implementation, since it is not allowed to
abort an operation, cannot accept the corresponding
schedule $\sigma$.

Now consider the serial sequence of operations bringing $D$
to state $G'$ (executed by a single process) and extend it with traverse
phases of two concurrent $\textit{insert}(k,v)$ operations executed by
new processes $p_1$ and $p_2$.  
The two traverse phases produce an execution $E'$  which is
indistinguishable to $p_1$ and $p_2$ from $E$ up to their last read
operations. Thus, if a pessimistic implementation accepts the corresponding
schedule $\sigma$, it must also accept $\sigma'$, violating LS-linearizability. 

Note, however, that an extension of $E'$ in which both inserts
complete by returning $\false$ is LS-linearizable. 
Moreover, any \emph{progressive} (e.g., using progressive opaque
transactional memory) optimistic strictly serializable implementation using 
will accept $\sigma'$.    
\end{proof}
%
\begin{figure*}[t]
\scalebox{0.7}[0.7]{\begin{tikzpicture}
\node (r1) at (1,0) [] {};
\node (r2) at (2,0) [] {};
\node (r3) at (3,0) [] {};
\node (r4) at (13,0) [] {};
\node (r5) at (14,0) [] {};

\node (i1) at (5,-1) [] {};
\node (i2) at (6,-1) [] {};

\node (i3) at (9,-2) [] {};
\node (i4) at (10,-2) [] {};
\node (w1) at (11,-2) [] {};

\draw (r1) node [above] {{$R(r)$}};
\draw (r2) node [above] {{$R(X_1)$}};
\draw (r3) node [above] { {$R(X_3)$}};
\draw (r4) node [above] { {$R(X_4)$}};
\draw (r5) node [above] { {$R(X_5)$}};

\draw (i1) node [above] {{$R(r)$}};
\draw (i2) node [above] {{$W(X_1)$}};

\draw (i3) node [above] {{$R(r)$}};
\draw (i4) node [above] {{$R(X_1)$}};
\draw (w1) node [above] {{$W(X_4)$}};

\begin{scope}   
\draw [|-|,thick] (0,0) node[left] {$\lit{find}(5)$} to (15,0) node[right] {$\lit{true}$};
\draw  [black,fill=black, radius=0.1] (1,0) circle (.6ex);
\draw  [fill=black, radius=0.1]  (2,0) circle (.6ex);
\draw   [fill=black, radius=0.1]  (3,0) circle (.6ex);
\draw   [fill=black, radius=0.1]  (13,0) circle (.6ex);
\draw   [fill=black, radius=0.1]  (14,0) circle (.6ex);

\end{scope}

\begin{scope}   
\draw [|-|,thick] (4,-1) node[left] {$\lit{insert}(2)$} to (7,-1) node[right] {$\lit{true}$};
\draw   [fill=black, radius=0.1]  (5,-1) circle (.6ex);
\draw   [fill=black, radius=0.1]  (6,-1) circle (.6ex);

\draw [|-|,thick] (8,-2) node[left] {$\lit{insert}(5)$} to (12,-2) node[right] {$\lit{true}$};
\draw   [fill=black, radius=0.1]  (9,-2) circle (.6ex);
\draw   [fill=black, radius=0.1]  (10,-2) circle (.6ex);
\draw   [fill=black, radius=0.1]  (11,-2) circle (.6ex);

\end{scope}

\end{tikzpicture}}
\caption{
 {A concurrency scenario for a set, initially $\{1,3,4\}$, where value $i$ is stored at node $X_i$:
   $\lit{insert}(2)$ and $\lit{insert}(5)$ can proceed
   concurrently with $\lit{find}(5)$. The history is
   LS-linearizable but not serializable; yet accepted by HOH-find. (Not all read-write on nodes is presented
   here.)}}\label{fig:ex1}%
\end{figure*}
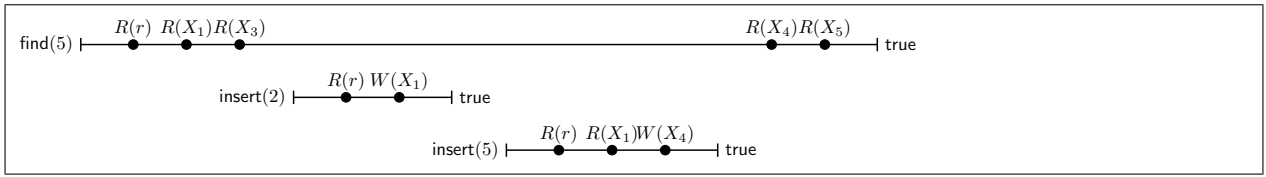
%
\subsection{Suboptimality of serializable optimism}
\label{sec:seropt}
We show below that for any \dd{} structure, there exists a schedule that is
rejected by \emph{any} serializable implementation but accepted by
a certain pessimistic one (\emph{HOH-find}, to be concrete).

\textbf{Proof intuition.}
We first illustrate the proof in the context of the integer set.
Consider a schedule $\sigma_0$ of a concurrent set implementation depicted in 
Figure~\ref{fig:ex1}.
We assume here that the set initial state is $\{1,3,4\}$.
Operation $\lit{find}(5)$ is concurrent, first with 
operation $\lit{insert}(2)$ and then with operation $\textsf{insert}(5)$. 
The history is not serializable:
$\lit{insert}(5)$ sees 
the effect of $\lit{insert}(2)$ because $R(X_1)$ by $\lit{insert}(5)$ returns the value
of $X_1$ that is updated by $\lit{insert}(2)$ and
thus should be serialized after it. But $\lit{find}(5)$ misses
node with value $2$ in the set, but must read the value of $X_4$ that is updated by
$\lit{insert}(5)$ to perform the read of $X_5$, \emph{i.e.}, the node created by $\lit{insert}(5)$. 
Thus, $\sigma_0$ is not (safe-strict) serializable.
However, this history is LSL since each of the three local histories is consistent with some
sequential history of the integer set.
However, there exists an execution of our HOH-find implementation that exports $\sigma_0$
since there is no read-write conflict on any two consecutive nodes accessed.

To extend the above idea to any search structure, we use the
\emph{non-triviality} property of data structures in $\D$.
There exist a state $G'$ in which there is exactly one edge $(a,b)$ in $G'$ such that
$b$ has key $k$.
We schedule a $op_f=\textit{find}(k)$ operation concurrently with two
consecutive delete operations: the first one, $op_{d1}$, deletes one of the nodes
explored by $op_f$ before it reaches $a$ (such a node exists by the
\textit{non-triviality} property), and the second one, $op_{d2}$
deletes the node with key $k$ in $G'$.
We make sure that  $op_f$ is not affected by $op_{d1}$ (observes an update to some node $c$
in the graph) but is affected 
by $op_{d2}$ (does not observe $b$ in the graph). The resulting
schedule is not strictly serialializable (though linearizable). 
But our HOH-find implementation in $\P$ will accept it.   
\begin{theorem}
\label{th:plm}
For any abstraction in $D\in\D$, there exists an implementation in
$\mathcal{P}$ that accepts a non-strictly serializable schedule.
\end{theorem}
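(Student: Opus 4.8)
The plan is to construct, for an arbitrary $D\in\D$, a concrete schedule $\sigma_0$ that is exported by some execution of the \emph{HOH-find} implementation of Algorithm~\ref{alg:hohfind} but that admits no legal sequential reordering consistent with $\ms{Cseq}$, hence is not strictly serializable (and a fortiori not safe-strict serializable). Since Theorem~\ref{th:nonser} already gives that \emph{HOH-find} is a legal member of the relevant implementation class, the only real work is to exhibit the schedule and prove the two claims: (i) \emph{HOH-find} accepts it, and (ii) no strictly serializable implementation does. The picture in Figure~\ref{fig:ex1} is the template; the task is to lift it from sorted linked lists to an arbitrary search structure using the \emph{non-triviality} property.

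First I would invoke \emph{non-triviality} to fix a key $k$ and a state $G'$ in which there is exactly one edge $(a,b)$ with $b$ having key $k$, and in which the shortest root-to-$a$ path has length at least $2$. Bring $D$ to state $G'$ by a serial prefix. Then schedule three operations: $op_f=\textit{find}(k)$, and two deletes $op_{d1}$ and $op_{d2}$. The operation $op_{d1}$ deletes some node $c$ that lies on a root-to-$a$ path explored early by $op_f$ (the length-$\geq 2$ condition guarantees such an intermediate node that is \emph{not} $k$-relevant to $op_f$ exists, so $op_f$'s traversal is not semantically disturbed — it still reaches $b$); $op_{d2}$ deletes the node $b$ with key $k$. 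The interleaving is arranged so that $op_f$ reads the edges around $c$ \emph{after} $op_{d1}$'s update (so $op_f$ ``sees'' $op_{d1}$) but reads the edges leading to $b$ \emph{before} $op_{d2}$ removes $b$ (so $op_f$ returns $\true$, not seeing $op_{d2}$). Meanwhile, because $op_{d1}$ and $op_{d2}$ each acquire the root lock sequentially and touch disjoint node sets along disjoint parts of the graph relative to $op_f$'s hand-over-hand shared locks, no two \emph{consecutively accessed} nodes ever carry conflicting exclusive/shared locks at the same time, so \emph{HOH-find} can indeed produce this interleaving without ever blocking — this is where I spell out, step by step, that the shared-lock / exclusive-lock discipline of Algorithm~\ref{alg:hohfind} permits exactly this schedule.

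Next I would prove non-serializability by the standard conflict-cycle argument. In any legal sequential reordering: since $op_f$ observes $op_{d1}$'s write to $c$'s edges, we need $op_{d1}$ before $op_f$; since $op_f$ reaches $b$ and returns $\true$ while $op_{d2}$ removes $b$, and $op_f$ must read the in-edge of $b$ that $op_{d2}$ also writes, we need $op_f$ before $op_{d2}$. But in a legal serial execution the node with key $k$ is present before $op_{d2}$ and absent after, whereas $op_{d1}$ deleting $c$ (with $c\neq b$) on its own leaves a state where $op_f$ cannot consistently both see $op_{d1}$'s change to $c$ and still traverse to $b$ unless $op_{d2}$ has not yet run — forcing the cycle $op_{d1}\to op_f\to op_{d2}$ together with the data-flow constraint that pins $op_f$'s view of $b$'s in-edge inconsistently with any placement of $op_{d2}$. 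I would package this as: no $\alpha'\in\ms{Cseq}$-legal reordering can simultaneously satisfy both read-from constraints, so $\sigma_0$ is not strictly serializable; hence rejected by every implementation in $\mathcal{SM}$ and by every serializable one. Linearizability of the corresponding high-level history, and LS-linearizability, follow exactly as in Theorem~\ref{th:nonser} (find is linearized while it holds shared locks on the $k$-relevant set; the two deletes are ordered by root-lock acquisition).

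The main obstacle I anticipate is the second half of the construction argument: verifying \emph{in full generality} — for an arbitrary traverse function $\tau_{\textit{find}(k)}$, arbitrary delete function $\nu_k$, and an arbitrary DAG shape satisfying \emph{non-triviality} — that the HOH hand-over-hand shared locking of $op_f$ genuinely never collides with the exclusive locks taken by $op_{d1}$ and $op_{d2}$ during the precise interleaving we want. The subtlety is that a delete's $\nu_k$ may add ``shortcut'' edges, and one must ensure the node $c$ chosen for $op_{d1}$ can be picked so that (a) $op_f$ provably traverses through $c$, (b) $c$ is strictly before $a$ on that path, (c) $c$ is not in $op_f$'s $k$-relevant set, and (d) $op_{d1}$'s modifications do not, via shortcutting, accidentally bypass $b$ or bring $b$'s neighborhood under an exclusive lock at the wrong moment. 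The length-$\geq 2$ clause of \emph{non-triviality} is exactly what is needed for (a)–(c), and \emph{proper traversals/updates} (updates only touch outgoing edges of $k$-relevant nodes, traversals follow real edges) is what rules out (d); I would lean on these two properties carefully and, if needed, additionally restrict attention to the ``last'' intermediate node before $a$ to make the disjointness airtight.
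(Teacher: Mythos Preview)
Your construction has the two ``sees'' relations inverted, and with that inversion the schedule you build \emph{is} strictly serializable, so the argument collapses.

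Concretely: you arrange that $op_f$ reads the edges around $c$ \emph{after} $op_{d1}$ and reads the edges leading to $b$ \emph{before} $op_{d2}$, so $op_f$ returns $\true$. From this you correctly extract the constraints $op_{d1}\to op_f$ and $op_f\to op_{d2}$. But that chain is not a cycle; the sequential order $op_{d1},\,op_f,\,op_{d2}$ is a perfectly legal serialization (after $op_{d1}$, node $c$ is gone but $b$ is still present, so $op_f$ legitimately finds $k$ and returns $\true$; then $op_{d2}$ removes $b$), and it respects real time since $op_f$ overlaps both deletes. Your attempt to manufacture an additional ``data-flow constraint'' on $b$'s in-edge does not produce one: $op_f$ reads $a$'s outgoing edge before $op_{d2}$ writes it, which is exactly the order $op_f\to op_{d2}$ you already have.

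The paper's construction is the mirror image, and this is what makes the cycle appear. One lets $op_f$ traverse \emph{past} $c$ first (so $op_f$ witnesses $c$ in the graph, forcing $op_f\to\textit{del}_c$), then runs $\textit{del}_c$ followed by $\textit{del}_b$ to completion (real time gives $\textit{del}_c\to\textit{del}_b$), and only then resumes $op_f$ at $a$, where it finds $b$ gone and returns $\false$ (forcing $\textit{del}_b\to op_f$). Now the three constraints form the cycle $op_f\to\textit{del}_c\to\textit{del}_b\to op_f$. This is also the pattern in Figure~\ref{fig:ex1}: $\textit{find}(5)$ reads $X_1$ \emph{before} $\textit{insert}(2)$ writes it and reads $X_4$ \emph{after} $\textit{insert}(5)$ writes it---not the other way around. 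Once you flip the two reads accordingly (and note $op_f$ must return $\false$, not $\true$), the HOH-find feasibility check you outline goes through essentially as you describe, and in fact becomes easier: $op_f$ is simply paused holding a shared lock near $a$ while the two deletes run sequentially under the root lock, touching nodes $op_f$ has already released or not yet reached.
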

\begin{proof}
Consider the \emph{HOH-find} implementation described in
Algorithm~\ref{alg:hohfind}. Take the key $k$ and the states $G$ and $G'$ 
satisfying the conditions of the \emph{non-triviality} property.

Now we consider the following execution. 
Let $op_f=\textit{find}(k)$ be applied to an execution resulting in
$G'$ (that contains a node with key $k$)  and run $op_f$ 
until it reads the only node $a=(k',v')$ in $G$ that points to a node $b=(k,v)$ in state $G'$. 
Note that since $R_k(G)=R_k(G')$, the
operation cannot distinguish the execution from than one starting with $G$.
 
The \emph{non-triviality} property requires that the shortest path
from the root to $k$ to $a$ in $R_k(G)$ is of length at least
two. Thus, the set of nodes explored by $op_f$ passed through at least
one node $c=(k'',v'')$ in addition to $a$. Now we schedule two 
complete delete operations executed by another process: first 
$\textit{del}_c= \textit{delete}(k'')$ which removes $c$,
followed by $\textit{del}_b=\textit{delete}(k)$ which removes $b$.  
Now we wake up $op_f$ and let it read $a$, find out that no node with
key $k$ is reachable, and return $\false$

Suppose, by contradiction, that the resulting execution is
strictly serializable. Since $op_f$ has witnessed the presence of some node $c$
on the path from the root to $a$ in the DAG,
$op_f$ must precede $\textit{del}_c$ in any serializaton.
Now $\textit{del}_b$ affected the response of $op_f$, it must precede $op_f$ in any
serialization. Finally, $\textit{del}_c$ precedes $\textit{del}_b$ in
the real-time order and, thus must precede $\textit{del}_b$ in any
serialization. The resulting cycle implies a contradiction.
\end{proof}%
Since any strictly serializable optimistic implementation only produces
strictly serializable executions, from Theorem~\ref{th:plm} we deduce that there
is a schedule accepted by a pessimistic algorithm that no strictly serializable
optimistic one can accept. 
Therefore, Theorems~\ref{th:mpl} and~\ref{th:plm} imply that, when applied to
\dd{} structures and in terms of concurrency, the strictly  serializable
optimistic approach is incomparable with pessimistic locking.  
As a corollary, none of these two techniques can be
concurrency-optimal.

%
%
\section{Related work}
\label{sec:related}
Sets of accepted schedules are commonly used as a
metric of concurrency provided by a shared memory
implementation.
For static database transactions, 
Kung and Papadimitriou~\cite{KP79} use the metric to 
capture the parallelism of a locking scheme,
While acknowledging that the metric is theoretical, they 
insist that it may
have ``practical significance as
well, if the schedulers in question have relatively small
scheduling times as compared with waiting and execution
times.'' 
Herlihy~\cite{Her90} employed the metric to compare various
optimistic and pessimistic synchronization techniques using
commutativity
of operations constituting high-level transactions.   
A synchronization technique is implicitly considered in~\cite{Her90} as highly
concurrent, namely ``optimal'',
if no other technique accepts more schedules. 
By contrast, we focus here on a \emph{dynamic} model where the scheduler cannot 
use the prior knowledge of all the shared addresses to be accessed. 
Also, unlike~\cite{KP79,Her90}, 
we require \emph{all} operations, including aborted ones, to observe (locally) consistent states.

Gramoli \emph{et al}.~\cite{GHF10} defined a concurrency metric, the \emph{input
acceptance}, as the ratio of committed transactions over aborted
transactions when TM executes the given schedule.   
Unlike our metric, input acceptance does not apply to
lock-based programs. 

Optimal concurrency is related to the notion of
\emph{permissiveness}~\cite{GHS08-permissiveness}, 
originally defined for transactional memory. In \cite{GHS08-permissiveness},
a TM is defined to be \emph{permissive with respect to serializability} 
if it aborts a transaction only if committing it
would violate serializability.
Thus, an operation of a concurrent data structure may be aborted only if the
execution of the encapsulated transaction is not serializable.
In contrast, our framework for analyzing concurrency
is independent of the synchronization technique and a concurrency-optimal implementation
accepts all correct interleavings of reads and writes of sequential operations (not just
high-level responses of the operations).

David et al.~\cite{DGT15} devised a pragmatic methodology for evaluating performance
of a concurrent data structure via a comparison with the performance
of its sequential counterpart. The closer is the throughput of a
concurrent algorithm is to that of its (inconsistent) sequential variant, the more
``concurrent'' the algorithm is.   
In contrast, the formalism proposed in this paper allows for relating concurrency
properties of various concurrent algorithms. 
   
Our definition of \dd{} data structures is based on the paper by Chaudhri and Hadzilacos~\cite{CH98} who studied them in the context
of dynamic databases.

Safe-strict serializable implementations ($\mathcal{SM}$) require that every transaction (even aborted and incomplete) observes
``correct'' serial behavior. 
It is weaker than popular TM correctness conditions like opacity~\cite{tm-book} and its
relaxations like \emph{TMS1}~\cite{TMS09} and \emph{VWC}~\cite{damien-vw},  
Unlike TMS1, we do not require the \emph{local} serial executions
to always respect the real-time order among transactions. 
Unlike VWC, we model transactional
operations as intervals with an invocation and a response and does not
assume unique writes (needed to define causal past in VWC). 
Though weak, $\mathcal{SM}$ still allows us 
to show that the resulting optimistic LSL implementations 
reject some schedules accepted by pessimistic locks. 
%
%
%
%

%
\section{Concluding remarks}
\label{sec:conc}

In this paper, we presented a formalism for reasoning about the relative power of optimistic and pessimistic
synchronization techniques in exploiting concurrency in \dd{}
structures. 
We expect our formalism to have practical impact as the \dd{} structures are among the
most commonly used concurrent data structures, including trees, linked lists, skip lists that 
implement various abstractions ranging from key-value stores to sets and multi-sets. 

Our results on the relative concurrency of $\mathcal{P}$
and $\mathcal{SM}$ imply that none of these synchronization techniques
might enable an optimally-concurrent algorithm.
Of course, we do not claim that our concurrency metric necessarily captures 
efficiency, as it does not account for other factors, 
like cache sizes, cache coherence protocols, or computational costs of 
validating a schedule, which may also affect performance on
multi-core architectures.
%
%
In~\cite{GKR15} we already described a
\emph{concurrency-optimal} implementation of the linked-list set
abstraction that combines  the advantages of $\mathcal{P}$, 
namely the semantics awareness, with the advantages of $\mathcal{SM}$, namely the ability to 
restart operations in case of conflicts.
We recently observed empirically 
that this optimality can result in higher performance than 
state-of-the-art algorithms~\cite{HHL+05,harris-set,michael-set}.     
%
Therefore, our findings motivate the search for concurrency-optimal
algorithms. This study not only improves our understanding
of designing concurrent data structures, but might lead to more efficient implementations.

\bibliography{references}

\end{document}